\theoremstyle{plain}
\newtheorem{theorem}{Theorem}
\newtheorem{lemma}{Lemma}
\newtheorem{claim}{Claim}
\newtheorem{cor}{Corollary}
\newtheorem{obs}{Observation}
\newcommand{\out}{\mbox{$\mathcal{O}$}}
\newcommand{\real}{\mbox{$\mathbb{R}$}}
\newcommand{\M}{\mathcal{M}}
\newcommand{\T}{\mathcal{T}}
\newcommand{\vect}{\mathbf }
\newcommand{\x}{\vect x}
\newcommand{\y}{\vect y}
\newcommand{\p}{\vect p}
\renewcommand{\b}{\vect b}
\newcommand{\s}{\vect s}
\newcommand{\bi}{{\vect{b}}_{-i}}
\title{Obvious Strategyproofness Needs Monitoring for Good Approximations}
\author{Diodato Ferraioli\thanks{DIEM, Universit\`a degli Studi di Salerno, Italy, \texttt{dferraioli@unisa.it}.
The author was supported by ``GNCS-INdAM''.} \and
Carmine Ventre\thanks{CSEE, University of Essex, UK, \texttt{c.ventre@essex.ac.uk}. The author was supported by the EPSRC grant EP/M018113/1.}}
\date{}
\begin{document}

\allowdisplaybreaks

\raggedbottom

\maketitle

\begin{abstract}
Obvious strategyproofness (OSP) is an appealing concept as it allows
to maintain incentive compatibility even in the presence of agents
that are not fully rational, e.g., those who struggle with contingent
reasoning \citep{li2015obviously}. However, it has been shown to impose some
limitations, e.g., no OSP mechanism can return a stable matching \citep{ashlagi2015no}.

We here deepen the study of the limitations of OSP mechanisms by
looking at their approximation guarantees for basic optimization
problems paradigmatic of the area, i.e., machine scheduling and
facility location. We prove a number of bounds on the approximation
guarantee of OSP mechanisms, which show that OSP can come at a
significant cost. However, rather surprisingly, we prove that OSP
mechanisms can return optimal solutions when they use monitoring — a
novel mechanism design paradigm that introduces a mild level of
scrutiny on agents’ declarations~\citep{kovacs2015mechanisms}.
\end{abstract}

\section{Introduction}
Algorithmic Mechanism Design (AMD) is by now an established research area in computer science that aims at conceiving algorithms resistant to selfish manipulations. As the number of parties (a.k.a., agents) involved in the computation increases, there is, in fact, the need to realign their individual interests with the designer's. Truthfulness is the chief concept to achieve that: in a truthful mechanism, no selfish and \emph{rational} agent has an interest to misguide the mechanism. A valid question of recent interest is, however, how easy it is for the selfish agents to understand that it is useless (and possibly costly) to strategize against the truthful mechanism at hand. 

Recent research has come up with different approaches to deal with this question. Some authors \citep{sandholm2003sequences,chawla2010multi,babaioff2014simple,adamczyk2015sequential} suggest to focus on ``simple'' mechanisms; e.g., in posted-price mechanisms one's own bid is immaterial for the price paid to get some goods of interest -- this should immediately suggest that trying to play the mechanism is worthless no matter the cognitive abilities of the agents.  However, in such a body of work, this property remains unsatisfactorily vague. An orthogonal approach is that of verifiably truthful mechanisms \citep{BP15}, wherein agents can run some algorithm to effectively check that the mechanism is incentive compatible. Nevertheless, these verification algorithms can run for long (i.e., time exponential in the input size) and are so far known only for quite limited scenarios. Importantly, moreover, they seem to transfer the question from the mechanism itself to the verification algorithm.

\citet{li2015obviously} has recently formalized the aforementioned idea of simple mechanisms, by introducing the concept of Obviously Strategy-Proof (OSP) mechanisms. This notion stems from the observation that the very same mechanism can be more or less truthful in practice depending on the implementation details. For example, in lab experiments, Vickrey's famous second-price mechanism results to be ``less'' truthful when implemented via a sealed-bid auction, and ``more'' truthful when run via an ascending auction. The quite technical definition of OSP formally captures how implementation details matter by looking at a mechanism as an extensive-form game; roughly speaking, OSP demands that strategyproofness holds among subtrees of the game (see below for a formal definition). An important validation for the `obviousness' is further 
provided by \citet{li2015obviously} via a characterization of these mechanisms in terms of agents with limited cognitive abilities (i.e., agents with limited skills in contingent reasoning). Specifically, Li shows that a strategy is obviously dominant if and only if these ``limited'' agents can recognize it as such. OSP is consequently a very appealing notion as in many cases rationality has been seen as the main obstacle to concrete applications of mechanism design paradigms, cf. e.g. \citep{FV15}; such a relaxation might be a panacea in these cases.

Nevertheless, for all its significant aspects, there appear to be  hints that the notion of OSP mechanisms might be too restrictive. \citet{ashlagi2015no} prove, for example, that no OSP mechanism can return a stable matching -- thus implying that the Gale-Shapley matching algorithm is not OSP despite its apparent simplicity.

\subsection{Our Contribution} 
We investigate the power of OSP mechanisms in more detail from a theoretical computer science perspective.
In particular, we want to understand the quality of approximate solutions that can be output by OSP mechanisms.
To answer this question, we focus on two fundamental optimization problems, machine scheduling \citep{AT01}
and facility location \citep{Mou80}, arguably (among) the paradigmatic problems in AMD.

For the former problem, we want to compute a schedule of jobs on selfish related machines (i.e., machines with job-independent speeds) so to minimize the makespan. For this single-dimensional problem, it is known that a truthful PTAS is possible \citep{CK13}. In contrast, we show that there is no better than $2$-approximate OSP mechanism for this problem independently from the running time of the mechanism. 

For the facility location problem, we want to determine the location of a facility on the real line given the preferred locations of $n$ agents. The objective is to minimize the social cost, defined as the sum over the individual agents of the distances between their preferred location and the facility's. \citet{Mou80} proves that the optimal mechanism, that places the facility on the median of the reported locations, is truthful without money (i.e., the mechanism does not pay or charge the agents). OSP mechanisms without money turn out to be much weaker than that. We prove in fact a tight bound of $n-1$. Interestingly, this bound can be shown also for mechanisms that use money, thus showing that transfers are not useful at all to enforce OSP. The proof of this fact uses a novel lower bounding technique for OSP mechanisms wherein the bidding domain (or, equivalently the strategy set) of the lying agent does not necessarily have size two (whereas, to the best of our knowledge, all previous impossibility results rely on this assumption); we in fact show that it is enough to identify two particular values in the bidding domain for our argument to work no matter the size of the domain. 

However, a surprising connection of OSP mechanisms with a novel mechanism design para\-digm -- called \emph{monitoring} -- allows us to prove strong positive results. Building upon the notion of mechanisms with verification \citep{NisRon99,Ven14,PenVen09}, \citet{kovacs2015mechanisms} introduce the idea that a mechanism can check the declarations of the agents at running time and guarantee that those who overreported their costs end up paying the exaggerated costs. This can be enforced whenever costs can be easily measured and certified. For example, a mechanism can force a machine that in her declaration has augmented her running time to work that long by keeping her idle for the difference between real and reported running time. 

We first prove that, no matter the algorithm at hand, there exists an OSP mechanism with monitoring that always compensates the agents their costs. This first-price mechanism, introduced in the context of mechanisms with monitoring by \citet{SVV16}, yields a couple of interesting observations in the context of OSP mechanisms. Firstly, it is the first direct-revelation OSP mechanism. As such it does not need any assumption on the agents' bidding domain nor to repeatdly query/interact with the agents.
% to find out more about their types; 
We remark that the known constructions of \cite{li2015obviously} typically need finite domains (or, slightly more generally, domains admitting a finite partition).  Secondly, the mechanism uses the algorithm at hand as a black box thus reconciling approximation and obvious strategyproofness. This is relavant because, even for truthful mechanism, it is known that, without monitoring, black box reduction to a non-truthful algorithm may not exist \citep{dobzinski2012computational}. This theorem can then be applied to both our problems of interest, and prove the existence of optimal OSP mechanisms with monitoring. Clearly, the optimal mechanism for machine scheduling runs in exponential time; a PTAS that is OSP can however be obtained by plugging in the appropriate approximation algorithm.

Nevertheless, as noted in related literature \citep{PT09}, paying the agents in facility location problems might not be feasible in certain contexts. We therefore look at mechanisms that charge agents to use the facilities (e.g., via a subscription fee); note that, as stated above, transfers ought to be used for good approximations. We design the \emph{interval mechanism} for facility location that is optimal, OSP with monitoring and charges (rather than paying) the agents. This construction adapts the first-price mechanism in \citep{SVV16} to guarantee obvious strategyproofness even in absence of funds to pay the agents. We further show that this simple adaptation can be cleverly modified to guarantee that agents are charged as little as possible; we in fact prove that in general there is no OSP mechanism that is more economical than that. 

Our results for facility location draw an interesting parallel between OSP and truthful mechanisms. On one hand, our bounds for OSP mechanisms without money are in stark contrast with the case of strategyproof mechanisms where the optimum is known to be truthful \citep{Mou80}. On the other hand, our results for OSP mechanisms can be likened to truthful mechanisms for $K$-facility location, $K\geq 2$, where there is a linear gap between truthful approximations with and without money \citep{PT09,FT14} (incidentally, there are hints that the gap remains linear also for relaxed notions of truthfulness without money \citep{aamas16}) -- in the case of OSP mechanisms the price to pay to close the gap is not only money, but also monitoring.

\section{Preliminaries}
% \paragraph{Mechanisms and Strategy-proofness.}
\smallskip \noindent {\bf Mechanisms and Strategyproofness.}
In this work we consider a classical mechanism design setting, in which we have a set of outcomes $\out$ and $n$ selfish agents.
Each agent $i$ has a \emph{type} $t_i \in D_i$, where $D_i$ is defined as the \emph{domain} of $i$.
The type $t_i$ is \emph{private knowledge} of agent $i$.
Moreover, each selfish agent $i$ has a \emph{cost function} $c_i \colon D_i \times \out \rightarrow \real$.
For $t_i \in D_i$ and $X \in \out$, $c_i(t_i, X)$ is the cost paid by agent $i$ to implement $X$ when her type is $t_i$.

A \emph{mechanism} consists of a protocol whose goal is to determine an outcome $X \in \out$.
To this aim, the mechanism is allowed to interact with agents.
During this interaction, agent $i$ is observed to take \emph{actions} (e.g., saying yes/no);
these actions may depend on her presumed type $b_i \in D_i$ that can be different from the real type $t_i$
(e.g., saying yes could ``signal'' that the presumed type has some properties that $b_i$ alone might enjoy).
We say that agent $i$ takes \emph{actions according to $b_i$} to stress this.
For a mechanism $\M$, we let $\M(\b)$ denote the outcome returned by the mechanism
when agents take actions according to their presumed types $\b = (b_1, \ldots, b_n)$.
Usually, this outcome is given by a pair $(f,\p)$,
where $f = f(\b)$ (termed \emph{social choice function} or, simply,  algorithm) maps the actions taken by the agents according to $\b$
to a feasible solution for the problem at the hand (e.g., an allocation of jobs to machines that enjoys particular properties),
and $\p=\p(\b)=(p_1(\b),\ldots,p_n(\b)) \in \real^n$ maps the actions taken by the agents according to $\b$
to \emph{payments} from the mechanism to each agent $i$.
Note that the $p_i$'s can be positive (meaning that the mechanism will pay the agents) or negative (meaning that the agents will pay the mechanism).

A mechanisms is said \emph{without money} if $p_i(\b)=0$ for every agent $i$ and every profile $\b \in D = D_1 \times \cdots \times D_n$.
Our definitions below do naturally extend to this case by considering null payments.

A mechanism $\M$ is \emph{strategy-proof} if for every $i$, every $\bi=(b_1, \ldots, b_{i-1}, b_{i+1}, \ldots, b_n)$
and every $b_i \in D_i$, it holds that $c_i(t_i,\M(t_i, \bi)) \leq c_i(t_i,\M(b_i, \bi))$,
where $t_i$ is the true type of $i$.
That is, in a strategy-proof mechanism the actions taken according to the true type are dominant for each agent.

Moreover, a mechanism $\M$ is said to satisfy \emph{voluntary participation} if for every $i$ and every $\bi$,
it holds that $c_i(t_i,\M(t_i, \bi)) \leq 0$.

\paragraph{Obvious Strategyproofness.}
%\smallskip \noindent {\bf Obvious Strategyproofness.}
Let us now formally define the concept of obviously strategy-proof mechanism.
This concept has been introduced in~\citep{li2015obviously}.
The original definition turns out to be very general and, consequently, quite complex.
For this reason, in this work we follow~\citet{ashlagi2015no} and rephrase this definition for our setting of interest.
Note that we focus on deterministic mechanisms only.

We begin by formally modeling how a mechanism works and subsequently  give some intuition behind the mathematical definition.
Specifically, we have that an \emph{extensive-form mechanism} $\M$ is defined by a directed tree $\T=(V,E)$ such that:
\begin{itemize}
 \item every leaf $\ell$ of the tree is labeled by a possible outcome $X(\ell) \in \out$ of the mechanism;
 \item every internal vertex $u \in V$ is labeled by a subset $S(u) \subseteq [n]$ of agents;
 \item every edge $e=(u,v) \in E$ is labeled by a subset $T(e) \subseteq D$ of type profiles such that:
 \begin{itemize}
  \item the subsets of profiles that label the edges outgoing from the same vertex $u$ are disjoint,
  i.e., for every triple of vertices $u, v, v'$ such that $(u,v) \in E$ and $(u,v') \in E$, we have that $T(u,v) \cap T(u,v') = \emptyset$;
  \item the union of the subsets of profiles that label the edges outgoing from a non-root vertex $u$
  is equal to the subset of profiles that label the edge going in $u$,
  i.e., $\bigcup_{v \colon (u,v) \in E} T(u,v) = T(\phi(u),u)$, where $\phi(u)$ is the parent of $u$ in $\T$; %along the path between the root and $u$;
  \item the union of the subsets of profiles that label the edges outgoing from the root vertex $r$
  is equal to the set of all profiles, i.e., $\bigcup_{v \colon (r,v) \in E} T(r,v) = D$;
  \item for every $u, v$ such that $(u,v) \in E$ and for every two profiles $\b, \b' \in T(\phi(u),u)$ such that $(b_i)_{i \in S(u)} = (b'_i)_{i \in S(u)}$,
  if $\b$ belongs to $T(u,v)$, then also $\b'$ must belong to $T(u,v)$.
 \end{itemize}
\end{itemize}

Roughly speaking, the tree represents the steps of the execution of the mechanism.
As long as the current visited vertex $u$ is not a leaf, the mechanism concurrently interacts with agents in $S(u)$. 
Different edges outgoing from vertex $u$ are used for modeling the different actions that agents can take during this interaction with the mechanism.
In particular, each possible action is assigned to an edge outgoing from $u$.
As suggested above, the action that agent $i$ takes may depend on her presumed type $b_i \in D_i$.
That is, different presumed types may correspond to taking different actions,
and thus to different edges.
The label $T(e)$ on edge $e=(u,v)$ then lists the type profiles that enable agents in $S(u)$ to take those actions that have been assigned to $e$.
In other words, when the agents take the actions assigned to edge $e$,
then the mechanism (and the other agents) can infer that the type profile must be contained in $T(e)$.
The constraints on the edges' label can be then explained as follows:
first we can safely assume that different actions must correspond to different type profiles
(indeed, if two different actions are enabled by the same profiles we can consider them as a single action);
second, we can safely assume that each action must correspond to at least one type profile
that has not been excluded yet by actions taken before node $u$ was visited
(otherwise, we could have excluded this type profile earlier);
third, we have that the action taken by agents in $S(u)$ can only inform about types of agents in $S(u)$
and not about the type of the remaining agents (that are completely unknown to agents in $S(u)$).
The execution ends when we reach a leaf $\ell$ of the tree. In this case, the mechanism returns the outcome that labels $\ell$.

Observe that, according to the definition above, for every profile $\b$ there is only one leaf $\ell = \ell(\b)$
such that $\b$ belongs to $T(\phi(\ell),\ell)$. For this reason we say that $\M(\b) = X(\ell)$.
Moreover, for every type profile $\b$ and every node $u \in V$, we say that $\b$ is \emph{compatible} with $u$ if $\b \in T(\phi(u),u)$.
Finally, two profiles $\b$, $\b'$ are said to \emph{diverge} at vertex $u$ if there are two vertices $v, v'$
such that $(u,v) \in E$, $(u,v') \in E$ and $\b \in T(u,v)$, whereas $\b' \in T(u,v')$.

We are now ready to define obvious strategyproofness.
An extensive-form mechanism $\M$ is \emph{obviously strategy-proof (OSP)} if for every agent $i$,
for every vertex $u$ such that $i \in S(u)$, for every $\bi, \bi'$, and for every $b_i \in D_i$
such that $(t_i, \bi)$ and $(b_i, \bi')$ are compatible with $u$, but diverge at $u$,
it holds that $c_i(t_i, \M(t_i, \bi)) \leq c_i(t_i,\M(b_i, \bi'))$.
Roughly speaking, an obvious strategy-proof mechanism requires that, at each time step
agent $i$ is asked to take a decision that depends on her type, the worst cost that
she can pay if at this time step she behaves according to her true type
is at least the same as the best cost achievable by behaving as she had a different type.

Hence, if a mechanism is obviously strategy-proof, then it is also strategy-proof.
Indeed, the latter requires that truthful behavior is a dominant strategy when agents know the entire type profile,
whereas the former requires that it continues to be a dominant strategy
even if agents have only a partial knowledge of profiles\footnote{In fact,
OSP implies -- but is not equivalent to -- weakly group strategyproofness \citep{li2015obviously}.},
limited to what they observed in the mechanism up to the time they are called to take their choices. 

We say that an extensive-form mechanism is \emph{trivial} if for every vertex $u \in V$ and for every two type profiles $\b,\b'$,
it holds that $\b$ and $\b'$ do \emph{not} diverge at $u$. That is, a mechanism is trivial if it never requires that agents
take actions that depend on their type.
Observe that if a mechanism $\M$ is not trivial, then every path from the root to one leaf goes through a vertex $u^\star$ such that
there are two type profiles $\b, \b'$ that diverge at $u^\star$. Since $\b \neq \b'$, then there exists at least one agent $i^\star$ such that
$b_{i^\star} \neq b'_{i^\star}$. Moreover, by our definition of extensive-form mechanism, it must be the case that $i^\star \in S(u^\star)$.
For this reason, we call $i^\star$ as the \emph{divergent agent} for the mechanism $\M$.
Note that the divergent agent takes a decision that depends on her own type
before any other agents revealed any information about their own type.
For this reason, in order to prove that a mechanism is not obviously strategy-proof,
it is sufficient to show that there are two type profiles $\b, \b'$ with $b_{i^\star} \neq b'_{i^\star}$ such that
they diverge at $u^\star$, and $c_{i^\star}(b_{i^\star},\M(\b)) > c_{i^\star}(b_{i^\star},\M(\b'))$.

Let us state two further properties of obvious strategyproofness, that turn out to be very useful in the rest of the paper.
First, it is not hard to see that if $\M$ is OSP when the type profile is taken from $D$,
then it continues to be OSP even if the types are only allowed to be selected
from $D' = D'_1 \times \cdots \times D'_n$, where $D'_i \subseteq D_i$.
Moreover, let us define $\M'$ obtained from $\M$ by \emph{pruning} the paths involving actions corresponding to types in $D \setminus D'$.
If $\M$ is OSP, then also $\M'$ enjoys this property~\citep{li2015obviously}.

\paragraph{Monitoring.}
%\smallskip \noindent {\bf Monitoring.} 
Let $\M(\b)$ denote the outcome returned by mechanism $\M=(f,\p)$ when agents take actions according to $\b$. Commonly, the cost paid by agent $i$ to implement $\M(\b)$ is defined as a quasi-linear combination of agent's true cost\footnote{Note that $t_i(f(\b))$ depends only on the type of the agent and the outcome of the social choice function.} $t_i(f(\b))$
and payment $p_i(\b)$, i.e., $c_i(t_i, \M(\b)) = t_i(f(\b)) - p_i(\b)$.
This approach disregards the agent's declaration for evaluating her cost.

In mechanisms with monitoring the usual quasi-linear definition is maintained but costs paid by the agents are more strictly tied to their declarations \citep{kovacs2015mechanisms}.
Specifically, in a {mechanism with monitoring} $\M$, the bid $b_i$ is a lower bound on agent $i$'s cost for $f(b_i, \bi)$, so an agent is allowed to have a real cost higher than $b_i(f(\b))$ but not lower.\footnote{We highlight that the designer only checks that agents are not ``faster'' than declared. That is, agents can pretend to have a higher cost/processing time at the expense of being ``busy'' that long (e.g., designer and agents could be in the same room). Agents can still underbid and at execution time have a higher cost (e.g., they could say to have underestimated their cost/work). Note that contrarily to the notion of verification in \citep{NisRon99} there is here no punishment for this misbehavior.}
Formally, we have \[
c_i(t_i,\M(\b)) = \max\{t_i(f(\b)), b_i(f(\b))\} - p_i(\b).
\]

\smallskip \noindent We next describe two specific problems of interest.

\paragraph{Machine Scheduling.}
\smallskip 
%\noindent {\bf Machine scheduling.}
Here, we are given a set of $m$ different jobs to execute and the $n$ agents control related machines.
That is, agent $i$ has a job-independent processing time $t_i$ per unit of job
(equivalently, an execution speed $1/t_i$ that is independent from the actual jobs).
The social choice function $f$ must choose a possible schedule $f(\b) = (f_1(\b), \ldots, f_n(\b))$ of jobs to the machines,
where $f_i(\b)$ denotes the job load assigned to machine $i$ when agents take actions according to $\b$.
The cost that agent $i$ faces for the schedule $f(\b)$ is $t_i(f(\b))=t_i \cdot f_i(\b)$. 
Note that our mechanisms for machine scheduling will always pay the agents.

Monitoring can be readily implemented for this setting.
In fact, monitoring  means that those agents who have exaggerated their unitary processing time,
i.e., they take actions according to $b_i>t_i$,
can be made to process up to time $b_i$ instead of the true processing time $t_i$.
For example, we could not allow any other operation in the time interval $[t_i, b_i]$ or charge $b_i-t_i$.

We focus on social choice functions $f^*$ optimizing the \emph{makespan}, i.e., 
\[
f^*(\b) \in \arg\min_{\x} \mathtt{mc}(\x, \b), \quad \mathtt{mc}(\x, \b) = \max_{i=1}^n b_i(\x).
\]
We say that $f$ is $\alpha$-approximate if it returns a solution whose cost is a factor $\alpha$ away from the optimum.

\paragraph{Facility Location.}
%\smallskip \noindent {\bf Facility location.}
In the facility location problem, the type $t_i$ of each agent consists of her position on the real line.
The social choice function $f$ must choose a position $f(\b) \in \real$ for the facility.
The cost that agent $i$ pays for a chosen position $f(\b)$ is $t_i(f(\b)) = d(t_i,f(\b)) = |t_i - f(\b)|$. 
So, $t_i(f(\b))$ denotes the distance between $t_i$ and the location of the facility computed by $f$ when agents take actions according to $\b$.

We can implement monitoring also in this setting whenever evidences of the distance can be provided (and cannot be counterfeited).
In fact, in this context, monitoring means that $t_i(f(\b)) = \max \{d(t_i, f(\b)), d(b_i, f(\b))\}$.
Therefore, once the evidence is provided, the mechanism can check whether $t_i(f(\b)) < b_i(f(\b))$
and charge the agent the difference for cheating.\footnote{One relevant applicative scenario here is for example reimbursement of previously declared expenses. These expenses are usually reimbursed only upon production of receipts so that for agents to be consistent with overbidding they need to pay the exaggerated (reported) cost. Receipts are then a tool for ``monitoring'' the agents.}

We focus on social choice functions $f^*$ optimizing the \emph{social cost}, i.e., 
\[
f^*(\b) \in \arg\min_{x \in \real} \mathtt{cost}(x, \b), \quad \mathtt{cost}(x, \b) = \sum_{i=1}^n b_i(x).
\]
As above, we say that $f$ is $\alpha$-approximate if it returns a solution whose cost is at most a factor $\alpha$ away from the optimum.

\section{A General Positive Result}
For an algorithm $f$, define $p_i(\b)= b_i(f(\b))$. We call the direct-revelation mechanism $(f, p)$ a first-price mechanism.

\begin{theorem}\label{thm:general}
Any direct-revelation first-price mechanism is OSP with monitoring and satisfies voluntary participation.
\end{theorem}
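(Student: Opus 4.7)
The plan is to exploit the very specific form of the first-price payment: since $p_i(\b)=b_i(f(\b))$, the monitoring cost function collapses in a very clean way. Writing out $c_i(t_i,\M(\b))=\max\{t_i(f(\b)),b_i(f(\b))\}-b_i(f(\b))$, I get $c_i(t_i,\M(\b))=\max\{t_i(f(\b))-b_i(f(\b)),\,0\}$. From here the two conclusions of the theorem are essentially immediate; the work is just to translate this into the definitions.

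First I would note that when the agent is truthful, i.e., $b_i=t_i$, we have $c_i(t_i,\M(t_i,\bi))=0$ for every profile $\bi$ of the other agents. This already gives voluntary participation. Second, for any bid $b_i \in D_i$ (truthful or not) and any $\bi' \in D_{-i}$, we have $c_i(t_i,\M(b_i,\bi'))\geq 0$, because it is a max of $0$ and something. Combining these two observations yields $c_i(t_i,\M(t_i,\bi)) = 0 \leq c_i(t_i,\M(b_i,\bi'))$, which holds universally over $\bi,\bi',b_i$.

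For OSP, I would describe the extensive-form tree of the direct-revelation mechanism as one in which each agent is visited exactly once and asked to declare her whole type in a single action; the edges out of agent $i$'s node are thus in one-to-one correspondence with $D_i$. Fix any such node $u$ with $i \in S(u)$ and any pair of compatible profiles $(t_i,\bi)$, $(b_i,\bi')$ diverging at $u$ (so in particular $b_i \neq t_i$). The inequality in the definition of OSP, namely $c_i(t_i,\M(t_i,\bi))\leq c_i(t_i,\M(b_i,\bi'))$, is exactly what we already derived: $0 \leq \max\{t_i(f(b_i,\bi'))-b_i(f(b_i,\bi')),0\}$. Hence $\M$ is OSP with monitoring.

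The argument is essentially free once monitoring is on the table, so there is no real obstacle; the only thing that needs a sentence of care is making it explicit that a direct-revelation mechanism fits into the extensive-form framework of Section~2 (each agent has a unique decision node whose out-edges enumerate her domain), so that the OSP definition applies literally. Note also that the argument does not use any property of $f$, which is precisely the point: the first-price/monitoring pair turns an arbitrary social choice function into an OSP mechanism, justifying the ``black-box'' claim made in the introduction.
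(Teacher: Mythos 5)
Your proof is correct and follows essentially the same route as the paper's: both hinge on the observation that the first-price payment makes the truthful cost exactly $0$ and any cost under monitoring at least $0$, from which the OSP inequality is immediate. The only cosmetic difference is that you fold the case analysis into the single identity $c_i(t_i,\M(\b))=\max\{t_i(f(\b))-b_i(f(\b)),0\}$ and you spell out the extensive-form tree of the direct-revelation mechanism, which the paper leaves implicit.
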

\begin{proof}
In order to prove that $\M=(f,p)$ is OSP, consider agent $i$ and let $t_i$ be her true type. We next show that for agent $i$ it is always convenient to be truthful, regardless of the decisions taken by other agents.
To this aim, let us recall that in a mechanism with monitoring the cost that $i$ pays, given the submitted type profile is $\b$, is
$$c_i(t_i, \M(\b)) = \max\{t_i(f_i(\b)), b_i(f_i(\b))\}  - p_i(\b).$$
 
\noindent Suppose that $i$ is truthful; then for every $\b_{-i}$, it turns out that
$$c_i(t_i, \M(t_i,\b_{-i})) = t_i(f_i(t_i,\b_{-i})) - p_i(t_i,\b_{-i}) = 0.$$

\noindent Suppose, instead, that $i$ lies and says $b_i$. Then for all $\bi$, if $b_i(f_i(\b)) > t_i(f_i(\b))$ then
$$c_i(t_i, \M(b_i,\b_{-i})) = b_i(f_i(b_i,\b_{-i})) - p_i(b_i,\b_{-i}) = 0;$$ if, instead, $t_i(f_i(\b)) \geq b_i(f_i(\b))$ then $$c_i(t_i, \M(b_i,\b_{-i})) = t_i(f_i(b_i,\b_{-i})) - p_i(b_i,\b_{-i}) > 0.$$ 

Note that all the cost (in)equalities hold no matter the value of $\bi$. Thus, in both cases the best cost that $i$ can obtain by adopting a strategy different from the truthful one is not smaller than the worst cost that $i$ can obtain by adopting the truthful strategy, as desired.
\end{proof}

It is important to note that in the construction above, we may use every algorithm as a \emph{black box}. This in particular means that we can turn any optimal (approximation, resp.) algorithm into an optimal (approximate, resp.) OSP mechanism with monitoring (without losses to the approximation guarantee, resp.). Thus, for Combinatorial Auctions (CAs) with additive bidders our mechanism with monitoring beats the lower bound proved by \cite{BG16} for OSP mechanisms.\footnote{We presented our setting for agents having a cost to implement the solution chosen by the mechanism; clearly, in CAs, agents have a non-negative valuation (i.e., non-positive cost) for the outcome of the auction. Since the theorem does not require the agents' costs to be positive, we can apply it also to CAs.} Just as the weaker notion of verification has been shown to be useful in the context of truthful CAs without money \citep{aamas14}, our result shows that OSP with monitoring matches the best-known (poynomial-time) approximations achieved not only  by truthful mechanisms \citep{jair15, LOS}, but also by general algorithms \citep{hs89, h99}. 

We also stress that this construction is \emph{query optimal}, as the interaction with each agent is minimum. It is worthy to observe that such an interaction does not need to be simultaneous (as assumed in practically all the literature on direct-revelation mechanisms) since obvious strategyproofness is maintained even if agents are queried in an adversarially chosen order and know what the bidders preceding them have declared. We will see how to exploit this property to reduce our payments for facility location. Finally, as observed above, we do not require the domain of each agent to be finite. 

\section{Machine Scheduling}
We now show that, without monitoring, there is no OSP mechanism that satisfies voluntary participation and returns an assignment of jobs to machines
whose makespan is at most twice the makespan of the optimal assignment.
Interestingly, this is the same lower bound that \citet{NisRon99} proved for the approximation ratio of strategy-proof mechanisms for \emph{unrelated} machines,
i.e., when it is not possible to express the processing time of jobs on machines as a product of jobs' load and machine's unit processing time.
We wonder if a more deep relationship exists between OSP mechanisms for scheduling on related machines
and strategy-proof mechanisms for scheduling on unrelated machines,
and if one can improve the lower bound for the former problem in order to match the best known lower bounds for the latter, i.e., $1 + \phi \approx 2.61$ for general mechanisms~\citep{koutsoupias2007lower}, and $n$ for anonymous mechanisms~\citep{ashlagi2012optimal}.
\begin{theorem}
 For every $\varepsilon > 0$, there is no $(2-\varepsilon)$-approximate mechanism for the machine scheduling problem
 that is OSP without monitoring and satisfies voluntary participation.
\end{theorem}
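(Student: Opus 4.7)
The plan is to exhibit a small canonical instance where approximation constraints fix almost every allocation, and then to derive a contradiction by playing two OSP inequalities (for the same agent, in the two ``directions'' between a moderate type and a very slow one) against each other.

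I would work with $n=2$ machines, $m=2$ jobs, and an agent domain containing three values $\{1,\beta,L\}$ chosen so that the approximation requirement becomes essentially rigid: $\beta$ large enough that at profile $(1,\beta)$ assigning anything except both jobs to machine $1$ has makespan $\geq 2-\varepsilon$ times the optimum, and $L$ large enough (in particular $L>\beta$, and $L$ comparable to $4\beta/(2-\varepsilon)$) that the same holds at $(1,L)$ and $(\beta,L)$. For every pair $(s,t)$ with $s,t\in\{1,\beta,L\}$ the $(2-\varepsilon)$-approximation then forces $f_i(s,t)$: jobs split evenly when $s=t$, and both jobs go to the fast machine when $s\neq t$. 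VP then gives $p_i(t,t)\geq t$ and $p_i(s,t)\geq 0$ whenever $f_i(s,t)=0$.

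The core of the argument is the following two OSP inequalities for agent $1$, at the vertex $u^\star$ where she first distinguishes her own types $\beta$ and $L$ (such a vertex exists because the two types must lead to different leaves). Assume first that $u^\star$ lies on a path along which agent $2$ has not yet been queried, so that both $b_2=1$ and $b_2=L$ are compatible with $u^\star$. Plugging the forced allocations into the definition of cost (without monitoring), OSP gives
\[
L - p_1(L,L) \;\leq\; -p_1(\beta,1)
\quad\text{(true type $L$, $b_2=L$ vs.\ lying $\beta$, $b_2'=1$)},
\]
\[
-p_1(\beta,1) \;\leq\; \beta - p_1(L,L)
\quad\text{(true type $\beta$, $b_2=1$ vs.\ lying $L$, $b_2'=L$)}.
\]
Chaining them yields $p_1(L,L)\leq \beta + p_1(L,L) - L$, i.e.\ $L\leq \beta$, contradicting the choice $L>\beta$. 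The symmetric argument applies verbatim if the first divergent agent is agent $2$ rather than agent $1$.

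The main obstacle is the case in which $u^\star$ is reached only after the other agent has been queried enough to restrict her bid to a set that excludes $1$ (so the above two-profile argument no longer typechecks). My plan here is to combine a sub-tree OSP inequality with a root OSP inequality. In the sub-tree where both bids are restricted to $\{\beta,L\}$, the same chain as above (now using only $b_2\in\{\beta,L\}$) forces $p_1(\beta,\beta)=p_1(L,L)$ and $p_1(\beta,L)\geq \beta + p_1(L,L)\geq \beta+L$. At the root, OSP for true type $1$ against the action leading into the $\{\beta,L\}$-branch, compared at leaf $(\beta,L)$, gives $p_1(1,L)\geq p_1(\beta,L)\geq \beta+L$; but OSP for true type $\beta$ against the $\{1\}$-action, compared at leaves $(\beta,1)$ and $(1,L)$, gives $p_1(1,L)\leq 2\beta+p_1(\beta,1)$, hence $p_1(\beta,1)\geq L-\beta$. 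Feeding these back into the root OSP constraint of true type $1$ at $(1,1)$ versus lying at $(\beta,L)$ produces the inequality $1-L\leq 2-\beta-L$, i.e.\ $\beta\leq 1$, again a contradiction. I would then observe that the recursive reasoning handles every possible interleaving of queries between agents $1$ and $2$: at the deepest vertex where some agent first separates $\beta$ from $L$, either the ``easy'' chain applies directly, or one combines it with the restrictions accumulated on the root-to-$u^\star$ path to produce the same style of contradiction.

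The hard part is therefore not the core inequality, which is short, but the bookkeeping in the case analysis: showing that no tree structure (3-way partition at the root, or any of the three 2-way partitions at the root, possibly interleaved with agent $2$'s queries before agent $1$'s $\beta$-vs-$L$ split) can escape the reduction to $L\leq \beta$. Once this is in place, letting $\varepsilon\to 0$ in the choice of $\beta$ and $L$ yields the stated lower bound for every positive $\varepsilon$.
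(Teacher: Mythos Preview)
Your approach is genuinely different from the paper's, and the ``easy'' chain is correct: when the first vertex at which agent $1$ separates $\beta$ from $L$ has agent $2$'s type still unconstrained (so both $b_2=1$ and $b_2=L$ are available), the two OSP inequalities you wrote chain to $L\leq\beta$ exactly as stated. The paper instead invokes the Archer--Tardos characterization: since OSP implies strategyproofness, the payment for a single-parameter agent is pinned down by the integral formula $p_i(t_i,t_{1-i})=t_if_i(t_i,t_{1-i})+\int_{t_i}^\infty f_i(x,t_{1-i})\,dx$, and the approximation bound then sandwiches the threshold $t''$ into a known interval. With payments thus bounded \emph{a priori}, a two-value domain $\{a,b\}$ suffices and there is only one possible divergence, so no tree case analysis is needed at all.

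The hard case of your plan, however, has a real gap. Your final step claims that the root OSP constraint ``true type $1$ at $(1,1)$ versus lying at $(\beta,L)$'' yields $1-L\leq 2-\beta-L$; to get the left-hand side you are implicitly using $p_1(1,1)\leq L$ (indeed, you would need $p_1(1,1)=p_1(L,L)$ \emph{and} an upper bound on $p_1(L,L)$). But the OSP inequalities you have derived only give $p_1(1,1)=p_1(L,L)\geq L$ (from voluntary participation), never an upper bound. The inequality $1-p_1(1,1)\leq 2-p_1(\beta,L)$ then reads $p_1(\beta,L)\leq 1+p_1(1,1)$, a \emph{lower} bound on $p_1(1,1)$, which is consistent with everything else you derived. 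More generally, without the payment characterization there is no elementary way to cap $p_1(1,1)$ from above using only OSP and VP on a three-point domain: one can satisfy all of your chained inequalities by taking $p_1(1,1)=p_1(\beta,\beta)=p_1(L,L)$ arbitrarily large and setting the off-diagonal payments accordingly. This is precisely the slack that Archer--Tardos closes, and it is why the paper's two-value argument goes through while your three-value, characterization-free argument stalls in the $\{1\}$-vs-$\{\beta,L\}$ branch.
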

\begin{proof}
Let us consider the simple setting in which there are exactly two machines, that we denote with $0$ and $1$, and two equivalent jobs of unit length.
We will denote with $t_0$ and $t_1$ the type, i.e., the job processing time, of machine $0$ and $1$, respectively.
Suppose there is a $k$-approximate, with $k < 2$, OSP mechanism $\M$ that satisfies voluntary participation.

Since the mechanism is $k$-approximate, then it must be the case that:
if $t_0 < \frac{t_1}{2k}$, then $\M$ assigns both jobs to machine $0$;
if $t_0 > 2k \cdot t_1$, then $\M$ assigns both jobs to machine $1$;
if $\frac{k}{2} \cdot t_1 < t_0 < \frac{2}{k} \cdot t_1$, then $\M$ assigns one job to each machine.

Moreover, since mechanism $\M=(f,\p)$ is OSP, then it must be also strategy-proof.
\citet{AT01} proved that a mechanism for the machine scheduling problem is strategy-proof and satisfies voluntary participation
if and only if (i) the allocation of jobs to machine $i \in \{0,1\}$ returned by $f$ when the type of the other machine is $t_{1-i}$ is \emph{monotone},
i.e., $f_i(t_i, t_{1-i}) \leq f_i(t'_i, t_{1-i})$ whenever $t_i > t'_i$;
(ii) the payment that the machine $i$ receives is
% $$p_i(t_i, t_{1-i}) = \int_0^{\infty} f_i(x,t_{1-i}) dx - \int_0^{t_i} [f_i(x,t_{1-i}) - f_i(t_i,t_{1-i})] dx.$$
$$p_i(t_i, t_{1-i}) = t_i f_i(t_i,t_{1-i}) + \int_{t_i}^{\infty} f_i(x,t_{1-i}) dx.$$

In our setting, the monotonicity requirement implies that, for every $t_{1-i}$, there are
$t' \in \left[\frac{t_{1-i}}{2k}, \frac{k}{2} \cdot t_{1-i}\right]$ and $t'' \in \left[\frac{2}{k} \cdot t_{1-i}, 2k \cdot t_{1-i}\right]$, such that
machine $i$ is assigned both jobs if $t_i < t'$, only one job if $t' \leq t_i \leq t''$, and no jobs if $t_i > t''$.
Hence, $p_i(t_i,t_{1-i}) = t' + t''$ if $t_i < t'$, $p_i(t_i,t_{1-i}) = t''$ if $t' \leq t_i \leq t''$, and $p_i(t_i,t_{1-i}) = 0$ otherwise.

Let us now restrict the domain of the agents to $D' = \{a, b\}^2$, with $b > k^2 a$.
Let $\M'$ be the mechanism obtained by pruning $\M$ according to this restriction.
As stated above, $\M'$ must be an OSP mechanism.
Moreover, the approximation ratio of $\M'$ cannot be worse than the approximation ratio of $\M$.
Hence, $\M'$ cannot be trivial (indeed, a trivial mechanism would have approximation ratio worse than $k$).

Let $i$ be the divergent agent of $\M'$. Clearly, $a$ and $b$ are the types in which $i$ diverges.
Suppose that $t_i = a$.
If $i$ behaves according to $t_i$, then it may be the case that the other agent behaves according type $a$ too.
As showed above, in this case machine $i$ receives one job and payment $t'' \leq 2k a$. Hence, $c_i(a, \M(a,a)) \geq a - 2k a$.
Suppose instead that $i$ behaves as if her type was $b$.
It may be the case that the other agent behaves according type $b$ too.
Then, machine $i$ still receives one job and a payment $t'' \geq \frac{2}{k} \cdot b$.  Hence,
$$
c_i(a, \M(b,b)) \leq a - \frac{2}{k} \cdot b < a - 2k a = c_i(a, \M(a,a)),
$$
where we used that $b > k^2 a$. 
In words, the best cost paid by $i$ if she does not behave according to her true type can be lower than the worst cost she can pay if she behaves according to her true type.
Then, the mechanism $\M'$ is not OSP, contradicting our hypothesis.
\end{proof}

Since there is a PTAS for the allocation of jobs to related machines~\citep{hochbaum1988polynomial},
then we have the following corollary of Theorem \ref{thm:general}.
\begin{cor}
 There is an OSP mechanism with monitoring that computes the optimal scheduling of jobs to related machines (in exponential time).
 Moreover, there is an OSP mechanism with monitoring that is a PTAS for the same problem. Both mechanisms satisfy voluntary participation.
\end{cor}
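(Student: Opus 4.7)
The plan is to observe that this is essentially an immediate consequence of Theorem~\ref{thm:general}, once we recall that the theorem treats the social choice function as a complete black box: given any algorithm $f$ for machine scheduling, the direct-revelation mechanism $(f, p)$ with $p_i(\b) = b_i(f(\b)) = b_i \cdot f_i(\b)$ is OSP with monitoring and satisfies voluntary participation. Since monitoring is implementable in the scheduling setting (we can force a machine that overbids her unit processing time to remain busy for the declared time), and since the approximation ratio of the induced mechanism coincides with the approximation ratio of the underlying algorithm $f$, the corollary reduces to exhibiting two appropriate choices for $f$.

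First I would instantiate $f$ with the optimal makespan algorithm $f^\star$ defined in the preliminaries (which can be computed in exponential time by, e.g., enumerating all assignments of the $m$ jobs to the $n$ related machines and picking one minimizing $\max_i b_i \cdot f_i(\b)$). Plugging $f^\star$ into Theorem~\ref{thm:general} yields an OSP mechanism with monitoring that returns the optimum schedule and satisfies voluntary participation. Second, I would instantiate $f$ with the polynomial-time approximation scheme of \citet{hochbaum1988polynomial} for scheduling on related machines: for any fixed $\varepsilon > 0$ this algorithm produces, in time polynomial in the input, a schedule whose makespan is within a factor $(1+\varepsilon)$ of the optimum. Plugging this $f$ into Theorem~\ref{thm:general} yields an OSP mechanism with monitoring that is a PTAS and, again by the theorem, satisfies voluntary participation.

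There is essentially no obstacle to overcome beyond the two instantiations: Theorem~\ref{thm:general} already establishes OSPness with monitoring and voluntary participation uniformly in $f$, and approximation guarantees are preserved because $f$ enters the mechanism unchanged. The only minor thing worth pointing out explicitly in the write-up is that the payment rule $p_i(\b) = b_i \cdot f_i(\b)$ is well-defined and non-negative for scheduling (so the mechanism pays the agents, as promised earlier in the section), and that in the PTAS instance the overall mechanism runs in polynomial time since both $f$ and the payments are computed in polynomial time.
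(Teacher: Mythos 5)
Your proposal is correct and matches the paper's argument exactly: the paper derives the corollary by plugging the (exponential-time) optimal algorithm and the PTAS of \citet{hochbaum1988polynomial} into the black-box first-price construction of Theorem~\ref{thm:general}. The extra remarks about monitoring being implementable for scheduling and the payments being non-negative are consistent with what the paper already establishes in the preliminaries.
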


\section{Facility Location}
We now show that, without monitoring, there is no OSP mechanism for the facility location problem with an approximation ratio better than $n-1$. To this aim, let us first state the following simple observation.
\begin{obs}
 \label{obs:no_fac_extreme}
 For every $\alpha, \beta$, with $\alpha < \beta$, no $k$-approximate mechanism $\M = (f, \p)$, with $k < n-1$, sets $f(\x) \leq \alpha$, if $x_i = \alpha$ and $x_j = \beta$ for every $j \neq i$, and $f(\y) \geq \beta$, if $y_i =\beta$ and $y_j = \alpha$ for every $j \neq i$.
\end{obs}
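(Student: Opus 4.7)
The plan is to handle the two parts of the claim symmetrically by directly comparing the cost achieved by the mechanism against the optimal social cost on each profile.

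First I would tackle the profile $\x$ with $x_i = \alpha$ and $x_j = \beta$ for all $j \neq i$. The optimum for facility location on the line under sum-of-distances is attained at a median; since $n-1$ of the $n$ agents sit at $\beta$, setting the facility at $\beta$ yields social cost exactly $\beta - \alpha$ (only agent $i$ contributes). Now assume toward a contradiction that $f(\x) \leq \alpha$. Then each of the $n-1$ agents at $\beta$ is at distance at least $\beta - \alpha$ from $f(\x)$, so
\[
\mathtt{cost}(f(\x),\x) \;=\; (\alpha - f(\x)) + (n-1)(\beta - f(\x)) \;\geq\; (n-1)(\beta-\alpha).
\]
Hence the approximation ratio on this profile is at least $\frac{(n-1)(\beta-\alpha)}{\beta-\alpha} = n-1$, contradicting $k < n-1$.

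For the second part, with $\y$ such that $y_i = \beta$ and $y_j = \alpha$ for all $j \neq i$, the argument is perfectly symmetric: the optimum (placing the facility at $\alpha$) costs $\beta - \alpha$, while any $f(\y) \geq \beta$ forces each of the $n-1$ agents at $\alpha$ to contribute at least $\beta - \alpha$, again yielding total cost at least $(n-1)(\beta-\alpha)$ and ratio at least $n-1$.

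There is no real obstacle here, as the statement is purely about $f$ without invoking OSP or the payments $\p$; the whole content is the two arithmetic comparisons above. The only care needed is to make the inequalities non-strict in the right places so that the bound $k < n-1$ (rather than $k \leq n-1$) suffices to derive the contradiction, which is what the computation above provides.
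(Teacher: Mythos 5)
Your proof is correct and uses the same underlying argument as the paper: compare the social cost of the mechanism's placement against the optimal (median) placement, which costs $\beta-\alpha$, and observe that pushing the facility past the lone agent forces cost at least $(n-1)(\beta-\alpha)$. The only difference is presentational — the paper derives the observation as a special case of the quantitative Lemma~\ref{lem:cond_fl} rather than proving it directly, but the arithmetic is the same.
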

% \begin{proof}
%  An optimal mechanism on input $\x$ places the facility in $\beta$ for a total cost of $d(\alpha,\beta)$.
%  If, instead, a $\left(n - 1 - \varepsilon\right)$-approximate mechanism sets $f(\x) = \alpha$, then the total cost is $(n-1) \cdot d(\alpha,\beta)$,
%  that violates the approximation ratio.
% \end{proof}

\noindent The observation will be proved below in a more general statement.

\begin{theorem}
 For every $\varepsilon > 0$, there is no $\left(n - 1 - \varepsilon\right)$-approximate mechanism for the facility location problem that is OSP without monitoring.
\end{theorem}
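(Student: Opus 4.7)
The plan is a proof by contradiction: suppose $\M=(f,\p)$ is OSP without monitoring and $(n-1-\varepsilon)$-approximate for some $\varepsilon>0$. A mechanism of finite approximation ratio cannot be trivial (a constant placement has unbounded ratio on adversarial inputs), so along some root-to-leaf path of $\M$'s extensive-form tree there is a first divergent vertex $u^\star$, with associated divergent agent $i^\star$. Because $u^\star$ is the \emph{first} divergent node on its path, every type profile in $D$ is still compatible with $u^\star$, giving me full freedom to pair any type of $i^\star$ with any completion of the other agents.

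The key step is the novel ``two particular values'' ingredient. At $u^\star$ the actions available to $i^\star$ induce a partition of $D_{i^\star}$ with at least two nonempty cells, and I pick $a,b\in D_{i^\star}$ with $a<b$ from two different cells, without pruning $D_{i^\star}$ to the set $\{a,b\}$ (which would be invalid if $D_{i^\star}$ is infinite). Let $\x$ be the profile in which $i^\star$ reports $a$ and every other agent reports $b$, and let $\y$ be its swap. I then invoke the OSP condition at $u^\star$ twice: once assuming $t_{i^\star}=a$ and comparing the truthful outcome at the completion $\bi=(b,\ldots,b)$ with the outcome of lying as $b$ at $\bi'=(a,\ldots,a)$, and once assuming $t_{i^\star}=b$ with the roles of $\bi$ and $\bi'$ swapped. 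Summing the two OSP inequalities cancels $p_{i^\star}(\x)$ and $p_{i^\star}(\y)$ telescopically and yields
\[
|a-f(\x)|+|b-f(\y)|\le|a-f(\y)|+|b-f(\x)|.
\]
Observation~\ref{obs:no_fac_extreme} applied with $\alpha=a$, $\beta=b$ and lone agent $i^\star$ gives $f(\x)>a$ and $f(\y)<b$; a short case analysis on whether $f(\x)$ and $f(\y)$ lie inside or outside $[a,b]$ eliminates every possibility except $f(\x),f(\y)\in[a,b]$, where the displayed inequality simplifies to $f(\x)\le f(\y)$. Combining this order constraint with the quantitative refinement of Observation~\ref{obs:no_fac_extreme} coming from the approximation inequality $|y-a|+(n-1)|y-b|\le k(b-a)$ applied to $\x$ (and symmetrically to $\y$), which pushes $f(\x)$ strictly above $a$ and $f(\y)$ strictly below $b$ by a quantity governed by $\varepsilon$, yields the required contradiction.

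The main obstacle I expect is payments: any single OSP inequality can be rendered vacuous by a cleverly chosen transfer, so the whole argument hinges on the symmetric pairing of OSP at $t_{i^\star}=a$ and $t_{i^\star}=b$ that cancels $p_{i^\star}(\x)$ and $p_{i^\star}(\y)$ and reduces the problem to a statement purely about facility locations. A secondary subtlety is that $D_{i^\star}$ may be infinite, so I cannot first prune to a binary-domain subinstance as in previous impossibility proofs; the argument must be carried out directly on the action partition induced by $i^\star$ at $u^\star$, and this is precisely the novel lower-bounding technique advertised in the introduction.
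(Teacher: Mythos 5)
Your setup (non-triviality, the first divergent vertex, the freedom to pair either divergent type of $i^\star$ with any completion, and the symmetric pairing of two OSP inequalities to cancel $p_{i^\star}(\x)$ and $p_{i^\star}(\y)$) matches the paper's strategy, and the telescoped inequality $|a-f(\x)|+|b-f(\y)|\le|a-f(\y)|+|b-f(\x)|$ is correctly derived. The gap is in the choice of profiles and the claimed final contradiction. With the symmetric swap profiles $\x=(a;b,\dots,b)$ and $\y=(b;a,\dots,a)$, your inequality reduces (for $f(\x),f(\y)\in[a,b]$) to $f(\x)\le f(\y)$, while the approximation guarantee on $\x$ forces $f(\x)\ge b-\frac{k}{n-1}(b-a)$ and on $\y$ forces $f(\y)\le a+\frac{k}{n-1}(b-a)$. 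These three constraints are simultaneously satisfiable whenever $k\ge\frac{n-1}{2}$: for instance $f(\x)=f(\y)=\frac{a+b}{2}$ is an $\frac{n}{2}$-approximation on both profiles and satisfies $f(\x)\le f(\y)$. So no contradiction arises for $k=n-1-\varepsilon$ once $n$ is moderately large; your argument, completed honestly, proves a lower bound of about $\frac{n-1}{2}$, not $n-1$. The sentence ``yields the required contradiction'' papers over exactly the place where the factor of two is lost.

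The paper avoids this by making the two divergent types \emph{close together} and the two profiles \emph{asymmetric}. It prunes to a fine grid $\{a,a+\delta,\dots,b\}$ with $\delta\le\frac{\varepsilon}{n-2}\cdot\frac{b-a}{2}$, finds adjacent grid points $c,c+\delta$ on which $i$ diverges, and compares $\x=(c+\delta;c,\dots,c)$ with $\y=(c;b,\dots,b)$. Observation~\ref{obs:no_fac_extreme} pins $f(\x)$ into $[c,c+\delta)$, and then the telescoped inequality $d(c+\delta,f(\y))-d(c,f(\y))\ge d(c+\delta,f(\x))-d(c,f(\x))$ forces $f(\y)\le c+\delta$ as well --- i.e., the facility stays within $\delta$ of $c$ even though $n-1$ agents sit at $b$, which costs a factor $n-1-\varepsilon$ once $\delta$ is small enough. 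Note that the ``two particular values'' are these adjacent grid points, not the extremes $a$ and $b$ of the domain; your decision to avoid discretisation altogether removes the very mechanism that makes the two divergent reports nearly indistinguishable in cost and hence lets the inequality transfer the tight localisation of $f(\x)$ onto $f(\y)$. To repair your proof you would need to replace the swap profiles with this asymmetric pair (or an equivalent device that keeps the two divergent types within $O(\varepsilon(b-a)/n)$ of each other).
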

\begin{proof}
 Let $\M = (f,\p)$ be a $\left(n - 1 - \varepsilon\right)$-approximate mechanism that is OSP without monitoring.
 Let us restrict the domain of every agent to $D' = \{a, a+ \delta, \ldots, b-\delta, b\}$,
 where $\delta \leq \frac{\varepsilon}{n-2} \cdot \frac{b-a}{2}$.
 Let $\M'$ be the mechanism obtained by pruning $\M$ according to this restriction.
 As stated above, $\M'$ must be an OSP mechanism.
 Moreover, the approximation ratio of $\M'$ cannot be worse than the approximation ratio of $\M$.
 Hence, $\M'$ cannot be trivial, otherwise its approximation ratio would be unbounded.
 
 Then, let $i$ be the divergent agent of $\M'$.
 Note that, by definition of divergent agent, there must be two types $t_i, t'_i$ of agent $i$ such that $t'_i = t_i + \delta$
 and $i$ takes an action in $\M'$ when her type is $t_i$ that is different from the action taken when her type is $t'_i$.
 We denote as $c$ and $d$ the smallest $t_i$ and the largest $t'_i$, respectively, for which this occurs,
 i.e., $c$ is the smallest type in $D'$ such that $i$ diverges on $c$ and $c+\delta$,
 and $d$ is the largest type in $D'$ such that $i$ diverges on $d$ and $d-\delta$.
 
 Note that either $c < \frac{b+a}{2}$ or $d > \frac{b+a}{2}$.
 Indeed, if $c \geq \frac{b+a}{2}$, then $d \geq c+\delta > \frac{b+a}{2}$.
 In the rest of the proof we will assume that $c < \frac{a+b}{2}$.
 The proof for the case that $d > \frac{a+b}{2}$ simply requires to replace $c$ with $d$, $c+\delta$ with $d-\delta$, and $b$ with $a$,
 and invert the direction of the inequalities in the next claims.
 
The proof uses two profiles $\x$ and $\y$, that are defined as follows:
 \begin{itemize}
 \item $x_i = c+\delta$, and $x_k = c$ for every $k \neq i,$;
 \item $y_i = c$, and $y_k = b$ for every $k \neq i$.
 \end{itemize}
We begin by using OSP to relate payments and outcomes of the mechanism $\M'$ on input $\x$ and $\y$. Specifically, we note that if the real location of $i$ is $t_i = x_i = c+\delta$ then
 $c_i(t_i,\M'(\x)) = d(c+\delta, f(\x)) - p_i(\x)$, and
 $c_i(t_i,\M'(\y)) = d(c+\delta, f(\y)) - p_i(\y)$.
 Since $i$ diverges on $c$ and $c+\delta$ and $\M'$ is OSP, we have that
 $c_i(t_i,\M'(\x)) \leq c_i(t_i,\M'(\y))$. Hence, it follows that
 \begin{equation}
  \label{eq:cond1}
  p_i(\x) \geq p_i(\y) - d(c+\delta, f(\y)) + d(c+\delta, f(\x)).
 \end{equation}

 Suppose instead that the real location of $i$ is $t'_i = y_i = c$ then
 $c_i(t'_i,\M'(\x)) = d(c, f(\x)) - p_i(\x)$, and
 $c_i(t'_i,\M'(\y)) = d(c, f(\y)) - p_i(\y)$.
 As above, since $i$ diverges on $c$ and $c+\delta$ and $\M'$ is OSP, we have that
 $c_i(t'_i,\M'(\y)) \leq c_i(t'_i,\M'(\x))$. Hence, it follows that
 \begin{equation}
  \label{eq:cond2}
  p_i(\x) \leq p_i(\y) - d(c, f(\y)) + d(c, f(\x)).
 \end{equation}
  
 Therefore, in order to satisfy both \eqref{eq:cond1} and \eqref{eq:cond2}, we need that
 \begin{equation}
  \label{eq:cond_fin}
  d(c+\delta, f(\y)) - d(c, f(\y)) \geq d(c+\delta, f(\x)) - d(c, f(\x)).
 \end{equation}

 Using \eqref{eq:cond_fin} above, we first show that $f(\x)$ must be at most $c$ and then that $f(\y) \leq c + \delta$. Finally, we prove how this last fact contradicts the desired approximation ratio.
 
 Let us first show that $f(\x) \geq c$. Suppose instead that $f(\x) < c$. Since $f(\x) < c$, then the r.h.s. of \eqref{eq:cond_fin} is $\delta$. As for the l.h.s., we distinguish two cases.
  If $f(\y) \leq c+\delta$, then, since $f(\y) > c$ according to Observation~\ref{obs:no_fac_extreme},
  then we have $(c+\delta - f(\y)) - (f(\y) - c) = \delta - 2(f(\y) - c) < \delta$.
  If $f(\y) > c + \delta$, we have $(f(\y) - (c+\delta)) - (f(\y) - c) = - \delta$. Hence, in both cases we reach a contradiction.
  
  We now show that $f(\y) \leq c+\delta$. Assume by contradiction that $f(\y) > c+\delta$. 
%  If the real location of $i$ is $t_i = x_i = c+\delta$ then
%  $c_i(t_i,M'(\x)) = d(c+\delta, f(\x)) - p_i(\x)$, and
%  $c_i(t_i,M'(\y)) = d(c+\delta, f(\y)) - p_i(\y)$.
%  Since $i$ diverges on $c$ and $c+\delta$ and $\M'$ is OSP, we have that
%  $c_i(t_i,M'(\x)) \leq c_i(t_i,M'(\y))$. Hence, it follows that
%  \begin{equation}
%   \label{eq:newcond1}
%   p_i(\x) \geq p_i(\y) - d(c+\delta, f(\y)) + d(c+\delta, f(\x)).
%  \end{equation}
%
%  Suppose instead that the real location of $i$ is $t'_i = y_i = c$ then
%  $c_i(t'_i,M'(\x)) = d(c, f(\x)) - p_i(\x)$, and
%  $c_i(t'_i,M'(\y)) = d(c, f(\y)) - p_i(\y)$.
%  As above, since $i$ diverges on $c$ and $c+\delta$ and $\M'$ is OSP, we have that
%  $c_i(t'_i,M'(\y)) \leq c_i(t'_i,M'(\x))$. Hence, it follows that
%  \begin{equation}
%   \label{eq:newcond2}
%   p_i(\x) \leq p_i(\y) - d(c, f(\y)) + d(c, f(\x)).
%  \end{equation}
%  
%  Therefore, in order to satisfy both \eqref{eq:newcond1} and \eqref{eq:newcond2}, we need that
%  $$
%   d(c+\delta, f(\y)) - d(c, f(\y)) \geq d(c+\delta, f(\x)) - d(c, f(\x)).
%  $$
  Since $f(\x) \geq c$, and $f(\x) < c+\delta$ by Observation~\ref{obs:no_fac_extreme}, we can rewrite \eqref{eq:cond_fin} as
  $$
   (f(\y) - (c+\delta)) - (f(\y) - c) \geq ((c+\delta) - f(\x)) - (f(\x) - c) \Rightarrow -\delta \geq \delta - 2 (f(\x) - c).
  $$
However, this is impossible since $f(\x) < c +\delta$.
 
 Finally, we prove that, given that $f(\y) \leq c+\delta$, then the mechanism is not $(n-1-\varepsilon)$-approximate.
 Indeed, since by Observation~\ref{obs:no_fac_extreme} $f(\y) > c$, the total cost of mechanism $\M'$ on input $\y$ is
 \begin{align*}
  (f(\y) - c) + (n-1)\left(b - f(\y)\right) & = (n-1)b -c - (n-2)f(\y)\\
  & \geq (n-1)(b-c) - (n-2) \delta\\
  & \geq (n-1)(b-c) - (n-2) \frac{\varepsilon}{n-2} \cdot \frac{b-a}{2} > (n-1-\varepsilon) (b-c),
 \end{align*}
 where we used that $b-c > b - \frac{b+a}{2} = \frac{b-a}{2}$.
 However, this is absurd, since $\M'$ is $(n-1-\varepsilon)$-approximate and
 the optimal mechanism on input $\y$ places the facility in $b$ and has total cost $b-c$.
\end{proof}

Next show that if we insist on mechanisms without money, then there is no OSP mechanism that can guarantee an approximation ratio better than $n-1$ even when the mechanism can use monitoring.
\begin{theorem}
\label{thm:no_approx_fl}
 For every $\varepsilon > 0$, there is no $(n-1-\varepsilon)$-approximate mechanism without money for the facility location problem that is OSP,
 even with monitoring.
\end{theorem}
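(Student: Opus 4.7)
The plan is to follow the skeleton of the preceding (no-money, no-monitoring) proof, replacing its payment-based derivation by a direct use of the approximation guarantee. I would assume by contradiction that $\M=(f,\p)$ with $\p\equiv 0$ is $(n-1-\varepsilon)$-approximate, OSP, and uses monitoring. Restrict the domain to $D'=\{a,a+\delta,\ldots,b\}$ with a step $\delta$ to be fixed at the end (smaller than in the previous theorem, e.g.\ $\delta\leq \tfrac{\varepsilon(b-a)}{4(n-2)}$). The pruned mechanism $\M'$ remains OSP and $(n-1-\varepsilon)$-approximate, hence non-trivial, so it admits a divergent agent $i$ at some node $u^\star$ with two consecutive divergent values $c,c+\delta$; I would assume $c<(a+b)/2$ exactly as in the previous proof. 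I then reuse the two profiles $\x=(c+\delta,c,\ldots,c)$ and $\y=(c,b,\ldots,b)$, and invoke Observation~\ref{obs:no_fac_extreme} (which is purely approximation-based) to obtain $f(\x)<c+\delta$ and $f(\y)>c$.

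Next, I would instantiate OSP in the monitoring/no-money regime, where $c_i(t_i,\M'(\b))=\max\{d(t_i,f(\b)),d(b_i,f(\b))\}$. Applying OSP at $u^\star$ with true type $c$ and lie $c+\delta$, taking $\bi=(b,\ldots,b)$ on the truthful side (profile $\y$) and $\bi'=(c,\ldots,c)$ on the lying side (profile $\x$), and using $f(\y)>c$ to simplify $d(c,f(\y))=f(\y)-c$, I obtain the single key inequality
\[
f(\y)-c \;\leq\; \max\{d(c,f(\x)),d(c+\delta,f(\x))\} \;=:\; D.
\]
The symmetric inequality for true type $c+\delta$ is available too but turns out not to be needed.

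The crux is then to bound $D$ via the approximation guarantee on $\x$, playing the role that \eqref{eq:cond_fin} played in the previous proof. Since the optimal cost on $\x$ equals $\delta$, approximation yields $(n-1)|f(\x)-c|+|f(\x)-(c+\delta)|\leq (n-1-\varepsilon)\delta$. Splitting on $f(\x)<c$ versus $c\leq f(\x)<c+\delta$ (the upper bound supplied by Observation~\ref{obs:no_fac_extreme}), a short calculation gives $c-f(\x)\leq (n-2-\varepsilon)\delta/n$ in the first case and $f(\x)-c\leq (n-2-\varepsilon)\delta/(n-2)$ in the second; in either subcase $D\leq \delta(2n-2-\varepsilon)/n$, and therefore $f(\y)-c\leq \delta(2n-2-\varepsilon)/n$.

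Finally, this contradicts the approximation bound on $\y$: since $c<f(\y)<b$, the cost on $\y$ equals $(n-1)(b-c)-(n-2)(f(\y)-c)$, so $(n-1-\varepsilon)$-approximation would force $f(\y)-c\geq \varepsilon(b-c)/(n-2)$. Using $b-c>(b-a)/2$ together with $(2n-2-\varepsilon)/n<2$, the choice $\delta\leq \tfrac{\varepsilon(b-a)}{4(n-2)}$ makes the two bounds on $f(\y)-c$ incompatible. The case $c>(a+b)/2$ is handled by the same mirroring as in the previous proof. The main obstacle, and the only genuinely new ingredient, is the absence of a payment identity pinning $f(\x)\geq c$ for free: without money I must allow $f(\x)$ to wander within the range permitted by the approximation ratio on $\x$, and absorb the resulting slack into a slightly smaller $\delta$; monitoring then cleanly converts the OSP condition into an upper bound on $f(\y)-c$ by a quantity $D$ that is itself controlled by the approximation guarantee on $\x$.
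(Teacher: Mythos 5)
Your proof is correct and follows essentially the same strategy as the paper's: identify the divergent agent, force the facility far from her true type by placing all other agents far away (so the truthful worst-case cost is large), and then note that by lying toward a profile where the others sit near the lie, the facility lands essentially on the lie and the monitored cost is just the small true-to-lie distance, contradicting OSP. The only difference is parameterization — the paper skips the domain discretization and works directly with the two diverging types $x_i>y_i$, scaling the far-away position via Lemma~\ref{lem:cond_fl}, whereas you shrink the grid step $\delta$ against the fixed endpoint $b$ — but the substance is identical.
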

In order to prove Theorem~\ref{thm:no_approx_fl}, we first need to state the following lemma,
that can be seen as a quantitative version of Observation~\ref{obs:no_fac_extreme}.
\begin{lemma}
\label{lem:cond_fl}
 Consider a type profile $\b$ such that $b_i = x$ for some $i$ and $b_j = x-\alpha$ for every $j \neq i$.
 Then $f(\b) \in \left[x - \alpha\left(1 + \frac{k-1}{n}\right), x - \alpha\left(1 - \frac{k-1}{n-2}\right)\right]$
 for every $k$-approximate mechanism.
\end{lemma}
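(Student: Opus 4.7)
The plan is to bound $f(\b)$ directly from the $k$-approximation guarantee by computing the social cost as a piecewise-linear function of $f(\b)$. The optimum places the facility at $x-\alpha$ (where the $n-1$ agents $j\neq i$ are located), giving a social cost of exactly $\alpha$ (contributed solely by agent $i$). So the $k$-approximation hypothesis becomes
\[
|x - f(\b)| + (n-1)\,|x - \alpha - f(\b)| \leq k\alpha.
\]

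I would then split into three cases according to the position of $f=f(\b)$ relative to the two ``interesting'' points $x-\alpha$ and $x$, and in each case remove the absolute values:
\begin{itemize}
\item If $f \leq x-\alpha$, write $f = x-\alpha - s$ with $s\geq 0$; the cost becomes $(\alpha+s) + (n-1)s = \alpha + ns$, so the approximation bound forces $s \leq \frac{(k-1)\alpha}{n}$, yielding $f \geq x - \alpha\bigl(1 + \tfrac{k-1}{n}\bigr)$, which is the lower bound claimed.
\item If $x-\alpha \leq f \leq x$, write $f = x-\alpha + s$ with $0\leq s \leq \alpha$; the cost becomes $(\alpha-s) + (n-1)s = \alpha + (n-2)s$, so $s \leq \frac{(k-1)\alpha}{n-2}$, giving $f \leq x - \alpha\bigl(1 - \tfrac{k-1}{n-2}\bigr)$, which is the upper bound claimed.
\item If $f \geq x$, write $f = x + s$ with $s\geq 0$; the cost becomes $s + (n-1)(\alpha + s) = (n-1)\alpha + ns \geq (n-1)\alpha$. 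For $k < n-1$ this already exceeds $k\alpha$, ruling this case out; for $k \geq n-1$ the upper bound in the statement is automatically satisfied (indeed $x - \alpha(1-\tfrac{k-1}{n-2}) \geq x$ when $k\geq n-1$), so nothing more is needed.
\end{itemize}

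Combining the three cases gives precisely the interval claimed. There is no real obstacle here beyond bookkeeping: the result follows from the approximation hypothesis alone, so OSP, monitoring, and payments play no role in the proof; the only care needed is to verify that Case 3 is consistent with the stated upper bound in the boundary regime $k \geq n-1$, and to note that for $k<n-1$ the facility is squeezed into $[x-\alpha, x]$ as in Observation~\ref{obs:no_fac_extreme}, recovering that observation as a special case.
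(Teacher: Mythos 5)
Your proof is correct and follows essentially the same route as the paper: both derive the interval purely from the $k$-approximation guarantee by comparing the piecewise-linear social cost at $f(\b)$ to the optimal cost $\alpha$ at $x-\alpha$, with your cases $f\le x-\alpha$ and $x-\alpha\le f\le x$ reproducing the paper's two contradiction computations exactly. The only quibble is your third case: for $k\ge n-1$ the fact that $x-\alpha\bigl(1-\tfrac{k-1}{n-2}\bigr)\ge x$ does not by itself place $f$ below that point --- you still need $(n-1)\alpha+ns\le k\alpha$ to give $s\le\tfrac{(k-n+1)\alpha}{n}\le\tfrac{(k-n+1)\alpha}{n-2}$ --- but this regime is irrelevant to the paper's application (which uses $k=n-1-\varepsilon<n-1$), and the paper's own proof is if anything less careful there.
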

\begin{proof}
 The optimal facility location for the given setting consists in placing the facility in position $x-\alpha$.
 The total cost in this case is $\alpha$.
 
 If $f(\b) < x - \alpha\left(1 + \frac{k-1}{n}\right)$, then the total cost is larger than
 $(n-1) \frac{(k-1)\alpha}{n} + \alpha + \frac{(k-1)\alpha}{n} = k\alpha$,
 thus no $k$-approximate mechanism can place the facility in $f(\b)$.
 Similarly, if $f(\b) > x - \alpha\left(1 - \frac{k-1}{n-2}\right)$, then the total cost is $(n-1) (f-x+\alpha) + x-f = (n-2)(f-x) + (n-1)\alpha > k\alpha$,
 thus no $k$-approximate mechanism can place the facility in $f(\b)$.
\end{proof}
We are now ready to prove Theorem~\ref{thm:no_approx_fl}.
\begin{proof}[Proof of Theorem~\ref{thm:no_approx_fl}]
 Suppose there is an OSP mechanism $\M$ that is $(n-1-\varepsilon)$-approximate.
 Clearly, the mechanism is non-trivial, otherwise its approximation ratio would be unbounded.
 Then, let $i$ be the divergent agent of $\M$, and let $x_i$ and $y_i$ be the types in which $i$ diverges. W.l.o.g., assume that $x_i > y_i$.
 Let $\lambda = 2\left(x_i - y_i\right)$ and $\alpha = \lambda \cdot \frac{n-2}{\varepsilon}$.
 Let $x_i$ be the truthful position of this agent.
 If $i$ plays truthfully, then she can face the setting in which the remaining $n-1$ agents are in position $x_i - \alpha$.
 By applying Lemma~\ref{lem:cond_fl} with $k = n-1-\varepsilon$ and $x = x_i$, we have that the distance of agent $i$ from the facility
 must be at least $x_i - x_i + \alpha\left(1 - \frac{n-2-\varepsilon}{n-2}\right) = \alpha \cdot \frac{\varepsilon}{n-2} = \lambda$.
 
 Suppose that instead $i$ plays as if her real location would be $y_i$.
 It may be then the case that the remaining $n-1$ agents are exactly in the same position.
 Then, any mechanism with bounded approximation must place the facility in $y_i = x_i - \frac{\lambda}{2}$.
 Recall that, with monitoring, the cost of agent $i$ must be taken as the maximum between the distance
 to the facility either from the real position or from the declared position.
 In this case, this is given by the former distance and it is $\frac{\lambda}{2} < \lambda$.
 Thus, the best cost paid by $i$ by not playing truthfully is lower than the worst cost that she can pay by playing truthfully.
 Then, the mechanism $\M$ is not OSP, contradicting our hypothesis.
\end{proof}

The bounds above are tight, since there is a $(n-1)$-approximate mechanism without money for the facility location problem that is OSP,
even without monitoring.
Consider, indeed, the dictatorship mechanism, in which only the dictator $i$ is queried for her position.
It is well-known that this mechanism is $(n-1)$-approximate. We next observe that it is also OSP.
Agent $i$ is the only agent that is involved in a decision and it is always better for her to reveal her real position $x_i$:
indeed, in this case the facility will be located exactly in her position and the cost of $i$ will be 0,
whereas by declaring a different position $x \neq x_i$ the cost will be $|x - x_i| > 0$.

\subsection{Optimal OSP Mechanisms for Facility Location}
Interestingly, monitoring gives an enormous power in this setting.
Indeed, since the optimal facility location is the median among the positions declared by agents, it follows from Theorem \ref{thm:general} that there is an OSP mechanism with monitoring that computes the optimal facility location in polynomial time. 

Recall that in this mechanism the agents receive a payment.
As noted in the introduction, however, for facility location problem we might need 
%next present an 
alternative mechanisms %for the facility location problem,
in which it is not the designer to pay agents, but the agents to pay the mechanism. Note that this is more natural in settings wherein agents' payments can be easily implemented via subscription fees or delayed access to the facility. We next present such an alternative optimal OSP mechanism.

We are going to assume that we are given some bounds on the agents' potential locations. (Note that in some of the related literature on facility location, agents can declare any location in $\real$.)
To simplify the notation, we assume that $D_i=[a,b]$ for all agents $i$. Consider now the following direct-revelation mechanism, that we call \emph{interval mechanism}:
 \begin{enumerate}
  \item Query agents for their position.
  \item Let $\x$ be the profile of the collected positions. Then fix the location $f(\x)$ of the facility to be the median of $\x$.
  In case of multiple medians, the facility is located on the leftmost median.
  \item For every agent $i = 1, \ldots, n$, set $p_i(\x)=d(x_i, f(\x)) - (b-a)$.
 \end{enumerate}

It is not hard to see that the interval mechanism simply ``shifts'' the payments of the mechanism in Theorem \ref{thm:general} to make them of the right sign. In a sense, the theorem below proves that, just like truthfulness, OSP is preserved when these shifts are bid independent. 

\begin{theorem}
The interval mechanism
is an optimal mechanism that is OSP with monitoring.
\end{theorem}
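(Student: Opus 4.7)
Optimality is immediate by construction: $f(\x)$ is (the leftmost) median of $\x$, and the median minimizes $\mathtt{cost}(x,\x)=\sum_i d(x_i,x)$ on the real line. For completeness I would also observe that, since $x_i, f(\x) \in [a,b]$, we have $d(x_i, f(\x)) \in [0, b-a]$, so $p_i(\x) \leq 0$ for every $i$ and $\x$; i.e., the interval mechanism indeed charges the agents rather than paying them, as advertised.

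For obvious strategyproofness with monitoring, the plan is to essentially re-run the argument of Theorem~\ref{thm:general}, noting that the additive term $-(b-a)$ in the payments is bid-independent and therefore cancels out when comparing truthful and untruthful behavior. Concretely, plugging the payments into the monitoring cost formula gives, for every profile $\b$,
\[
c_i(t_i, \M(\b)) = \max\{d(t_i, f(\b)),\, d(b_i, f(\b))\} - d(b_i, f(\b)) + (b-a).
\]
When $i$ is truthful, $b_i = t_i$ and the right-hand side equals exactly $b-a$, irrespective of $\bi$. When $i$ misreports $b_i \neq t_i$, one of two cases occurs: either $d(b_i, f(\b)) \geq d(t_i, f(\b))$, in which case the cost is again $b-a$; or $d(t_i, f(\b)) > d(b_i, f(\b))$, in which case the cost is strictly greater than $b-a$.

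Finally I would conclude: for every agent $i$, every vertex $u$ at which $i$ is queried, and every pair of compatible, diverging profiles $(t_i, \bi)$ and $(b_i, \bi')$ at $u$, the worst-case truthful cost is $b-a$ and the best-case untruthful cost is at least $b-a$, so the OSP inequality $c_i(t_i, \M(t_i, \bi)) \leq c_i(t_i, \M(b_i, \bi'))$ holds independently of the particular extensive-form implementation of the direct-revelation query. The only step that needs care is checking that a bid-independent additive shift of the payments does not affect the OSP comparison (it cancels on both sides), so there is no real obstacle beyond a careful bookkeeping of the monitoring cost; the point of the statement is precisely that this shift converts Theorem~\ref{thm:general}'s payments into charges while preserving optimality and OSP.
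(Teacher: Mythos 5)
Your proposal is correct and follows essentially the same route as the paper: both compute the monitoring cost with the shifted first-price payments, observe that truthful reporting yields exactly $b-a$ for every profile of the others, and that any misreport yields at least $b-a$ (your case split on which of $d(t_i,f(\b))$, $d(b_i,f(\b))$ is larger is equivalent to the paper's split on whether $f(\x')=x'_i$). Your explicit remarks that the $-(b-a)$ shift is bid-independent and that the payments are non-positive are exactly the points the paper makes in the discussion surrounding the theorem.
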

\begin{proof}
 We will next prove that the mechanism is OSP, and thus each agent has an incentive to declare her real position.
 Since the mechanism places the facility in the median of these positions, it then turns out to be optimal as well.
 
 In order to prove that it is OSP, 
 recall that in a mechanism with monitoring the cost that $i$ pays is
 $c_i(x_i, \M(\y)) = \max\{d(x_i,f(\y)), d(y_i,f(\y))\} - p_i(\y)$.
 Consider then agent $i$ and let $x_i$ be her real position.
 If $i$ declares the real position, then her total cost will be $b-a$.
 If $i$ declares a different position $x'_i$, then there are two cases:
 if $\min_{\x'_{-i}} c_i(x_i,\M(\x'))$ is achieved in a profile $\x'_{-i}$ such that $f(\x') \neq x'_i$, then
 \begin{align*}
  c_i(x_i, \M(\x')) &= \max\{d(x_i,f(\x')), d(x'_i,f(\x'))\} - p_i(\x')\\
  & \geq d(x'_i,f(\x')) - p_i(\x') = b-a;
 \end{align*}
 otherwise (that is, if $f(\x') = x'_i \neq x_i$)
 \begin{align*}
  c_i(x_i, \M(\x')) &= \max\{d(x_i,f(\x')), d(x'_i,f(\x'))\} - p_i(\x')\\
  & = d(x_i,x'_i) - p_i(\x') > b-a.
 \end{align*}
 Thus, in both cases the best cost that $i$ can obtain by declaring a position different from the real one
 is not smaller than the worst cost that $i$ can obtain by playing truthfully.
\end{proof}

The drawback of the interval mechanism is that the payment that this mechanism charges may be as large as the size of the interval. This opens the question of whether more frugal payment schemes exist --- or in other words, how susceptible OSP with monitoring is to payment shifts that are not bid independent. 

We will show in Section~\ref{apx:frugal} that it is indeed possible to slightly optimize the interval mechanism in order to be less expensive for the agents. We further prove that our optimization is optimal as long as we focus on direct-revelation mechanisms. However, even this optimized version still requires that $O(n)$ agents will pay an amount that is about $b-a$. We will show that this is somewhat unavoidable, even if one considers mechanisms that are not direct-revelation.

\subsubsection{The Optimized Interval Mechanism}\label{apx:frugal}
% Let us consider a finite domain for agents locations, i.e., $D_i = \{a, a+\delta, \ldots, b\}$.
Consider the following optimized version of the interval mechanism, that we call \emph{Optimized Interval Mechanism (OIM)}:
 \begin{enumerate}
  \item Query agents for their position.
  \item Let $\x$ be the profile of the collected positions. Then fix the location $f(\x)$ of the facility to be the median of $\x$.
  In case of multiple medians, the facility is located on the leftmost median.  
  \item For every agent $i \in [n]$, let $K_i$ ($k_i$, resp.) be the set (number, resp.) of agents queried before $i$.
  Let $\s = (s_1, \ldots, s_{k_i})$ be the profile containing the locations declared by these agents in non-decreasing order. Let $\ell = \left\lceil \frac{n}{2} \right\rceil  + k_i - n + 1$ and $r = \left\lceil \frac{n}{2} \right\rceil - 1$.
  If $\ell>1$, $r< k_i$ and $s_{\ell - 1} = s_{r+1}$, then set $p_i(\x) = 0$ for every $\x=((\s, x_i), \x_{-K_i\cup\{i\}})$. 
  
  Otherwise, we define $L_i$ and $R_i$ as follows:
  $$L_i = \begin{cases}
           s_\ell, & \text{if } \ell \geq 1;\\
           a, & \text{otherwise}.
          \end{cases}
  \qquad
  R_i = \begin{cases}
           s_r, & \text{if } r \leq k_i;\\
           b, & \text{otherwise}.
          \end{cases}
  $$
  Let also define $A_i$ and $B_i$ as follows:
  $$A_i = \begin{cases}
	   L_i, & \text{if } L_i = R_i;\\
           2L_i - b, & \text{if } R_i > L_i > \frac{a+b}{2};\\
           a, & \text{otherwise}.
          \end{cases}
    \qquad
    B_i = \begin{cases}
	   R_i, & \text{if } L_i = R_i;\\
           2R_i - a, & \text{if } L_i < R_i < \frac{a+b}{2};\\
           b, & \text{otherwise}.
          \end{cases}
  $$
  Finally, let $m_i = \max \{R_i - A_i, B_i - L_i\}$.
  
  If $x_i \in [A_i, B_i]$,
% or $x_i = A_i = a$ or $x_i = B_i = b$,
  then set $p_i(\x) = d(x_i, f(\x)) - m_i$ for every $\x$.
% if $f(\x) \neq x_i$ and $p_i(\x) = \delta - m_i$ otherwise.
  
%  \noindent If $x_i = a = A_i - c\delta$ or $x_i = b = B_i + c\delta$, for $c \geq 1$,
 % then set $p_i(\x) = d(x_i, f(\x)) - m_i - (c-1)\delta$ if $f(\x) \neq x_i$ and $p_i(\x) = \delta - m_i - (c-1)\delta$ otherwise.
  
  \noindent If $x_i < A_i$,
% - c\delta \neq a$ or $x_i = B_i + c\delta \neq b$, for $c \geq 0$,
  then set $p_i(\x) = d(x_i, f(\x)) - m_i - d(x_i, A_i)$ for every $\x$.

  \noindent If $x_i > B_i$,
% - c\delta \neq a$ or $x_i = B_i + c\delta \neq b$, for $c \geq 0$,
  then set $p_i(\x) = d(x_i, f(\x)) - m_i - d(x_i, B_i)$ for every $\x$.
 \end{enumerate}
 
The idea behind OIM is to exploit the information given by the interactive implementation of the mechanism to reduce the charge to the bidders, i.e., use the value of $k_i$ to reduce the payment to bidder $i$. In fact, when all bidders bid simultaneously then $k_i=0$ for all $i$ and OIM is simply the interval mechanism.

The way in which this optimization upon $k_i$ is done can arguably appear a bit complex but is not too hard to explain. First, when $s_{\ell-1} = s_{r+1}$, then the facility will be placed in $s_{\ell-1}$ regardless of the location
declared by $i$ and by every other agent $j$ queried after $i$. %whose location is not known to $i$. 
Hence, these agents will not have any incentive in declaring a position that is different from their real location even without payments.

%Second, consider a profile $\x$ in which $i$ receives the facility, i.e., $f(\x) = x_i$. Then, we can slightly reduce the payment of agent $i$ for this profile. In detail, the reduction preserves truthfulness as obvious dominant strategy of agent $i$ with real location $x_i$. If the payment reduction is small (at most the distance between $x_i$ and $x'_i$), then the truthfulness of $i$ remains obviously dominant when her real location is $x'_i$.

As for the second and most important optimization step, we consider profiles $\x$ for which there are agents very far away from the facility. Indeed, as we will hint in Lemma~\ref{lem:propLR}, the facility is very likely to be included in the interval $[L_i, R_i]$. Thus, if an agent $i$ in $\x$ is very far away from this interval, one can slightly lower the payment assigned to her for every other profile and still have an OSP mechanism. Indeed, no such agent has an incentive to either move from another profile to $\x$ (obviously) nor to move from $\x$ to another profile $\x'$ (if the payment reduction is comparable with the distance between $x_i$ and $x_i'$). More details on the effectiveness of this optimization can be found in Lemma~\ref{lem:opt_outside}.

We highlight that this last optimization is particularly relevant when there is a location $f$ such that when $i$ declares $f$,
then the facility will securely be located in $f$ even if $s_{\ell-1} \neq s_{r+1}$ (this case corresponds to $L_i = R_i = f$). In this case,
it is possible to reduce the cost of every {agent $i$} in profile $\x$ from $b-a$ to $|x_i - f|$.

Nevertheless, we note that the mechanism still has very large costs, namely $b-a$, for at least $\left\lceil \frac{n}{2}\right\rceil - 1$ agents. Indeed, for these agents, it turns out that $L_i = A_i = a$ and $R_i = B_i = b$, and thus $m_i = b-a$. We will show that this is inevitable with a direct-revelation mechanism.

We say that a profile $\x'$ is \emph{$i$-compatible} if $\x'=(\x_{K_i}, \x_{-K_i}')$, i.e., $x_j'=x_j$ for all the agents $j \in K_i$. Before proving the properties of OIM, let us make some observations on $L_i$ and $R_i$.
\begin{lemma}
\label{lem:propLR}
For every $i$, and every $t > L_i$, $f(\x') \geq L_i$ for every $i$-compatible profile $\x'$ with $x'_i = t$,
 and there is one such profile for which $f(\x') = L_i$.
 Similarly, for every $t < R_i$, $f(\x') \leq R_i$ for every $i$-compatible profile $\x'$ with $x'_i = t$,
 and there is one such profile for which $f(\x') = R_i$.
\end{lemma}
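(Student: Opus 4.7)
The plan is to use monotonicity of the leftmost median in each coordinate. Among $i$-compatible profiles $\x'$ with $x'_i = t$, the coordinates indexed by $K_i$ and the $i$-th coordinate are fixed, while the $n - k_i - 1$ coordinates of agents queried after $i$ range freely over $[a,b]$. Consequently $\min f(\x')$ is attained by pushing every free coordinate to $a$ and $\max f(\x')$ by pushing every free coordinate to $b$; it then suffices to identify these extrema with $L_i$ and $R_i$ respectively.

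I would first handle the claim about $L_i$. Fix $t > L_i$ and set every free coordinate to $a$. The sorted profile is then $n - k_i - 1$ copies of $a$ followed by the sorted merge of the $(k_i+1)$-tuple $(s_1, \ldots, s_{k_i}, t)$, so its rank-$\lceil n/2 \rceil$ entry equals $a$ when $\ell = \lceil n/2 \rceil + k_i - n + 1 \leq 0$, and otherwise equals the rank-$\ell$ entry of that merge. The first case matches the definition $L_i = a$. In the second case $L_i = s_\ell$, and the hypothesis $t > s_\ell$ forces $t$ into a position strictly greater than $\ell$ in the merge, so the rank-$\ell$ entry is still $s_\ell = L_i$. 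This gives both an explicit witness profile attaining $f(\x') = L_i$ and, by monotonicity, the inequality $f(\x') \geq L_i$ for every $i$-compatible $\x'$ with $x'_i = t$.

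The argument for $R_i$ is the mirror image: push the free coordinates to $b$ so the sorted profile ends with $n - k_i - 1$ copies of $b$, and note that its rank $\lceil n/2 \rceil = r+1$ entry falls among the trailing $b$'s when $r > k_i$ (matching $R_i = b$), and otherwise coincides with the rank-$(r+1)$ entry of the sorted merge of $(s_1, \ldots, s_{k_i}, t)$. The assumption $t < s_r = R_i$ places $t$ among the first $r$ positions of the merge, so the rank-$(r+1)$ entry is $s_r = R_i$. Monotonicity again upgrades this to $f(\x') \leq R_i$ for every $i$-compatible $\x'$ with $x'_i = t$.

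The only real care needed is at the boundary of the piecewise definitions of $L_i$ and $R_i$: one must verify that the fallback values $a$ and $b$ (used when $\ell < 1$ or $r > k_i$) agree with what the median computation actually yields there, and that strict vs.\ weak inequalities in the hypotheses $t > L_i$ and $t < R_i$ are used correctly to pin the insertion position of $t$ in the merge. With that bookkeeping in place, monotonicity closes the argument.
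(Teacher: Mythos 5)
Your proposal is correct and follows essentially the same route as the paper: both construct the extremal $i$-compatible witness profiles by setting all agents queried after $i$ to $a$ (resp.\ $b$) and compute the leftmost median by rank-counting, splitting on whether $\ell \geq 1$ (resp.\ $r \leq k_i$). Your explicit appeal to coordinatewise monotonicity of the order statistic is just a cleaner justification of the step the paper dismisses as ``immediate to see,'' namely that no declaration by the later agents can push the facility past $L_i$ or $R_i$.
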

\begin{proof}
 Let $t > L_i$ and consider the profile $\x'$ such that
 $$x'_j = \begin{cases}
           x_j, & \text{if } j \in K_i;\\
           t, & \text{if } j = i;\\
           a, & \text{otherwise}.
          \end{cases}
 $$
 It is easy to see that $f(\x') = L_i$. Indeed, if $\ell < 1$,
 then in $\x'$ there are $n-k_i - 1 = \left\lceil \frac{n}{2} \right\rceil - \ell > \left\lceil \frac{n}{2} \right\rceil - 1$ agents whose location is $a$.
 Hence, the leftmost median of $\x'$ must be $a = L_i$.
 If $\ell \geq 1$, then $L_i$ is the $\ell$-th smallest location among agents that are processed before $i$,
 and there are in $\x'$ exactly $n-k_i-1$ agents whose location is surely not larger than $L_i$. Then $L_i$ is the $\ell + n - k_i -1= \left\lceil \frac{n}{2} \right\rceil$-th smallest location in $\x'$, i.e. the (leftmost) median.
 
 On the other hand, it is immediate to see that there is no declaration by agents $j\notin K_i$, with $j \neq i$, that can make
the facility go to the left of $L_i$.
 
 Let now $t < R_i$ and consider the profile $\x'$ such that
 $$x'_j = \begin{cases}
           x_j, & \text{if } j \in K_i;\\
           t, & \text{if } j = i;\\
           b, & \text{otherwise}.
          \end{cases}
 $$
 It is easy to see that $f(\x') = R_i$. Indeed, if $r > k_i$,
 then in $\x'$ there are $n-k_i-1 > \left\lfloor \frac{n}{2} \right\rfloor$ agents whose location is $b$.
 Hence, the median of $\x'$ is $b = R_i$.
 If $r \leq k_i$, then $R_i$ is the $r$-th smallest location among agents that are processed before $i$, and there is in $\x'$ exactly one agent whose location is smaller than $R_i$. Then $R_i$ is the $r + 1 = \left\lceil \frac{n}{2} \right\rceil$-th smallest location in $\x'$, i.e. the (leftmost) median.
 
 Moreover, as above, it is immediate to see that there is no declaration by agents $j\notin K_i$, with $j \neq i$, that can make
 the facility go to the right of $R_i$.
\end{proof}

\begin{lemma}
\label{lem:propLRout}
 For every $i$, and every $t \leq L_i$, $f(\x') \geq t$ for every $i$-compatible profile $\x'$ with $x'_i = t$.
 Similarly, for every $t \geq R_i$, $f(\x') \leq t$ for every $i$-compatible profile $\x'$ with $x'_i = t$.
\end{lemma}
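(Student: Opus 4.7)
The plan is a direct counting argument based on the fact that $f$ places the facility at the leftmost median, i.e., at the $\lceil n/2 \rceil$-th smallest coordinate of the declared profile. Under this characterization, $f(\x') \geq t$ is equivalent to saying that fewer than $\lceil n/2 \rceil$ coordinates of $\x'$ are strictly below $t$, and symmetrically $f(\x') \leq t$ is equivalent to at most $\lfloor n/2 \rfloor$ coordinates being strictly above $t$. Proving the lemma then reduces to verifying these counting bounds by exploiting the definitions of $\ell$, $r$, $L_i$, and $R_i$.

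For the first claim, fix an $i$-compatible profile $\x'$ with $x'_i = t \leq L_i$ and split the $n$ coordinates of $\x'$ into three groups. Agent $i$ contributes its coordinate $t$, which is not strictly below $t$. The $k_i$ agents in $K_i$ have locations given by the sorted vector $\s=(s_1,\ldots,s_{k_i})$; when $\ell \geq 1$ we have $L_i = s_\ell$ and $t \leq s_\ell$, so at most $\ell-1$ of the $s_j$'s are strictly below $t$ (when $\ell < 1$, $L_i = a$ forces $t = a$, and no coordinate is strictly below $t$). The remaining $n - k_i - 1$ agents contribute at most $n - k_i - 1$ coordinates below $t$. Summing and substituting $\ell = \lceil n/2 \rceil + k_i - n + 1$ gives at most $(\ell-1) + (n-k_i-1) = \lceil n/2 \rceil - 1$ coordinates strictly below $t$, which yields $f(\x') \geq t$.

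The second claim follows by the symmetric count. With $x'_i = t \geq R_i$, agent $i$ contributes no coordinate strictly above $t$; when $r \leq k_i$ we have $R_i = s_r$ and $t \geq s_r$, so at most $k_i - r$ coordinates in $\s$ exceed $t$ (the corner case $r > k_i$ forces $t = b$, making the claim trivial); and the remaining $n-k_i-1$ agents contribute at most $n-k_i-1$. Summing with $r = \lceil n/2 \rceil - 1$ gives at most $n - r - 1 = \lfloor n/2 \rfloor$ coordinates strictly above $t$, so $f(\x') \leq t$.

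I do not foresee a substantive obstacle. The only care needed is to consistently translate between "leftmost median" and "$\lceil n/2 \rceil$-th order statistic" while keeping strict and non-strict inequalities straight; the degenerate cases $\ell < 1$ and $r > k_i$ are absorbed by observing that they force $t$ to coincide with an endpoint of $[a,b]$, and the arithmetic collapses once one plugs in the definitions of $\ell$ and $r$.
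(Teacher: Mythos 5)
Your proof is correct and rests on the same underlying computation as the paper's: translating the leftmost median into the $\lceil n/2\rceil$-th order statistic and counting coordinates on either side of $t$ using the definitions of $\ell$ and $r$. If anything, your version is slightly tidier, since you bound the count for an arbitrary $i$-compatible profile directly, whereas the paper analyzes only the extremal profile (all remaining agents at $a$, resp.\ $b$) and leaves the extension to general profiles implicit.
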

\begin{proof}
 Let $t \leq L_i$ and consider the profile $\x'$ such that
 $$x'_j = \begin{cases}
           x_j, & \text{if } j \in K_i;\\
           t, & \text{if } j = i;\\
           a, & \text{otherwise}.
          \end{cases}
 $$
 If $\ell \geq 1$, then $L_i$ is the $\ell$-th smallest location among agents that are processed before $i$,
 and there are in $\x'$ exactly $n-k_i$ agents whose location is surely not larger than $L_i$. Then $L_i$ is the $\ell + n - k_i = \left\lceil \frac{n}{2} \right\rceil + 1$-th smallest location in $\x'$. Thus, the leftmost median of $\x'$ will be $s_{\ell - 1}$ if $t \leq s_{\ell - 1}$ and $t$ otherwise. When $\ell < 1$, then $L_i=a$ and clearly $f(\x')\geq a$.
 
If $t \geq R_i$, let us consider the profile $\x'$ such that
$$x'_j = \begin{cases}
           x_j, & \text{if } j \in K_i;\\
           t, & \text{if } j = i;\\
           b, & \text{otherwise}.
          \end{cases}
 $$
 If $r \leq k_i$, then $R_i$ is the $r$-th smallest location among agents that are processed before $i$, and there is in $\x'$ no agent whose location is surely smaller than $R_i$. Then $R_i$ is the $r= \left\lceil \frac{n}{2} \right\rceil - 1$-th smallest location in $\x'$. Thus, the leftmost median of $\x'$ will be $s_{r + 1}$ if $t \geq s_{r + 1}$ and $t$ otherwise. When $r > k_i$, then $R_i=b$ and clearly $f(\x')\leq b$.
\end{proof}

We first show that OIM is optimal and OSP with monitoring. Next we will prove that no direct-revelation mechanism enjoys the same properties with lower payments.
\begin{theorem}
\label{thm:algo2osp}
 OIM is an optimal mechanism that is OSP with monitoring.
\end{theorem}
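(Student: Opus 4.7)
The plan is to prove OSP with monitoring by the standard strategy of comparing, at every query point for agent $i$, the worst-case cost of truthful reporting against the best-case cost of any deviation. Optimality is immediate once OSP is established: the social choice function always picks the leftmost median, which is known to minimise the sum-of-distances objective.

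First I would dispose of the degenerate case $s_{\ell-1}=s_{r+1}$: here the history already fixes the median, the payment is set to zero, and OSP holds trivially because $i$'s declaration has no effect on either $f$ or $p_i$. In the main case, I would exploit the fact that the payment is designed so that the truthful cost is \emph{constant} in the continuation profile. Substituting $b_i=t_i$ into $c_i(t_i,\M(\b)) = \max\{d(t_i,f(\b)),d(b_i,f(\b))\}-p_i(\b)$ yields
\[
    c_i(t_i,\M(t_i,\bi)) \;=\; m_i + \mathrm{corr}(t_i),
\]
where $\mathrm{corr}(x)=\max\{0,\,A_i-x,\,x-B_i\}$ encodes the three-way split of the payment formula. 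So the worst case of truthful play is a single number.

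Next I would compute the cost of an arbitrary lie $b_i\neq t_i$ in the same form,
\[
    c_i(t_i,\M(b_i,\bi)) \;=\; m_i + \mathrm{corr}(b_i) + \max\{0,\,d(t_i,f(\b))-d(b_i,f(\b))\},
\]
which reduces OSP to the clean inequality
\[
    \min_{f}\max\{0,\,d(t_i,f)-d(b_i,f)\} \;\ge\; \mathrm{corr}(t_i)-\mathrm{corr}(b_i),
\]
where the minimum ranges over facility locations realisable by an $i$-compatible continuation of $(b_i,\bi)$. Lemmas \ref{lem:propLR} and \ref{lem:propLRout} pin down exactly this range: when $b_i\in(L_i,R_i)$ the facility can be pushed to either endpoint of $[L_i,R_i]$, and when $b_i$ lies outside $[L_i,R_i]$ it can be pushed out to $b_i$ itself.

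The last step is a case analysis on the positions of $t_i$ and $b_i$ relative to $[A_i,B_i]$. When both lie in $[A_i,B_i]$ the right-hand side is non-positive and the inequality is automatic; when both lie strictly outside on the same side it follows from the triangle inequality. The main obstacle is the mixed case, e.g.\ $t_i>B_i$ with $b_i\in[A_i,B_i]$ (and its mirror on the left): here $\mathrm{corr}(t_i)-\mathrm{corr}(b_i)=t_i-B_i>0$, and in the worst continuation of the lie the facility is dragged to $L_i$, giving $d(t_i,f)-d(b_i,f)=(t_i-L_i)-(b_i-L_i)=t_i-b_i\ge t_i-B_i$. This is exactly why $B_i$ is defined as $2R_i-a$ when $L_i<R_i<(a+b)/2$, and symmetrically for $A_i$: it is the largest $b_i$ for which the worst continuation's pull of $f$ toward $L_i$ still compensates the bonus reaped through $\mathrm{corr}$, while $m_i=\max\{R_i-A_i,\,B_i-L_i\}$ is calibrated so the truthful cost absorbs the asymmetry between $A_i$ and $L_i$ (resp.\ $B_i$ and $R_i$). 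I expect the remaining effort to be routine but delicate bookkeeping across the piecewise definitions of $A_i,B_i$, in particular handling the subcase $b_i<L_i$ or $b_i>R_i$ via Lemma~\ref{lem:propLRout} rather than Lemma~\ref{lem:propLR}.
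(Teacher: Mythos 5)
Your overall strategy is the same as the paper's: dispose of the degenerate case $s_{\ell-1}=s_{r+1}$ separately, observe that the payment rule makes the truthful cost the constant $m_i+\mathrm{corr}(t_i)$, and reduce OSP to a pointwise inequality that must hold at every facility location realisable after the lie, with the realisable range controlled by Lemmas~\ref{lem:propLR} and~\ref{lem:propLRout}; your $\mathrm{corr}$ function is just a tidier packaging of the paper's three payment branches. There is, however, one concrete slip in the only nontrivial case --- exactly the case the paper spends \eqref{eq:OSP_out_left} and \eqref{eq:OSP_out_right} on. For $t_i>B_i$ and $b_i\in[A_i,B_i]$, the continuation you must beat is not the one that drags $f$ to $L_i$ but the one that drags it to $R_i$, the endpoint \emph{near} $t_i$: the deviator's surcharge $\max\{0,\,d(t_i,f)-d(b_i,f)\}$ equals $t_i-b_i$ for every realisable $f\le b_i$, but drops to $t_i+b_i-2R_i$ at $f=R_i$, which is strictly smaller whenever $b_i<R_i$. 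Your displayed computation $(t_i-L_i)-(b_i-L_i)=t_i-b_i\ge t_i-B_i$ therefore verifies only the slack endpoint; the binding check is $t_i+b_i-2R_i\ge t_i-B_i$, i.e.\ $b_i\ge 2R_i-B_i$, which holds precisely because $B_i=2R_i-a$ in the only subcase where $t_i>B_i$ is possible (symmetrically, $A_i=2L_i-b$ for the left mirror, where the binding endpoint genuinely is $L_i$, as in the paper's \eqref{eq:OSP_out_right}). Once the endpoints are swapped, your calibration remark about $A_i$ and $B_i$ is correct and the remaining bookkeeping goes through as in the paper.
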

\begin{proof}
We will next prove that the mechanism is OSP with monitoring, and thus for each agent it is obviously dominant to declare her real position. Since OIM places the facility on the median of these positions, it then turns out to be optimal as well.
 
Consider then agent $i$ and let $x_i$ be her real position.
If $s_{\ell - 1} = s_{r + 1} = \lambda$, then the facility will be located in $\lambda$ and $i$ receives a zero payment, regardless of her declaration and the declarations of the agents not in $K_i$.
 
 Suppose now that $s_{\ell - 1} \neq s_{r + 1}$ and the real position of $i$ is $x_i \in [A_i, B_i]$.
 If $i$ declares her real position, then her total cost will be at most $m_i$.
 If $i$ declares a different position $x'_i$, then
for every $i$-compatible profile $\x'$
% there are two cases:
 %if $\min_{\x'_{-i}} c_i(x_i,\M(\x'))$ is achieved in a profile $\x'_{-i}$ such that $f(\x') \neq x'_i$, then
 \begin{align*}
  c_i(x_i, \M(\x')) &= \max\{d(x_i,f(\x')), d(x'_i,f(\x'))\} - p_i(\x')\\
  & \geq d(x'_i,f(\x')) - p_i(\x') = c_i(x_i', \M(\x')) \geq m_i.
 \end{align*}
 %otherwise (that is, if $f(\x') = x'_i \neq x_i$)
 %\begin{align*}
 % c_i(x_i, \M(\x')) &= \max\{d(x_i,f(\x')), d(x'_i,f(\x'))\} - p_i(\x')\\
 % & = d(x_i,x'_i) - p_i(\x') \geq \delta - \delta + m_i = m_i.
 %\end{align*}
 
 Suppose now that $s_{\ell - 1} \neq s_{r + 1}$ and the real position of $i$ is $x_i = A_i - c$ or $x_i  = B_i + c$ with $c > 0$.
 W.l.o.g. we will assume $x_i = A_i - c$.
 If $i$ declares the real position, then her total cost will be at most $m_i+ c$.
  If $i$ declares a position $x'_i = A_i - c'$ or $x'_i = B_i + c'$ with $c' > c$,
 then for every $i$-compatible profile $\x'$
% such that $f(\x') \neq x'_i$ 
we have that
 \begin{align*}
  c_i(x_i, \M(\x')) &= \max\{d(x_i,f(\x')), d(x'_i,f(\x'))\} - p_i(\x') \geq d(x'_i,f(\x')) - p_i(\x')\\
  & \geq m_i + c' > m_i + c.
 \end{align*}
% where we used the fact that, according to Lemma~\ref{lem:propLR} and Lemma~\ref{lem:propLRout}, $f(\x') \geq \min\{x'_i, L_i\} > x_i$.
% Instead, for every $\x'$ such that $f(\x') = x'_i$ we have that
% \begin{align*}
 % c_i(x_i, \M(\x')) &= \max\{d(x_i,f(\x')), d(x'_i,f(\x'))\} - p_i(\x') = d(x_i,f(\x')) - p_i(\x')\\
 % & = d(x_i,x'_i) - p_i(\x') \geq (c'-c)\delta + m_i + (c'-2)\delta \geq m_i + (c'-1)\delta = m_i + c\delta.
% \end{align*}
 
 If $i$ declares a position $x'_i = A_i - c'$ for $0 < c' < c$, then for every $i$-compatible profile $\x'$ we have that
 \begin{align*}
  c_i(x_i, \M(\x')) &= \max\{d(x_i,f(\x')), d(x'_i,f(\x'))\} - p_i(\x') = d(x_i,f(\x')) - p_i(\x')\\
  & = d(x_i, x'_i) + d(x'_i,f(\x')) - p_i(\x') = (c-c') + m_i + c' = m_i + c,
 \end{align*}
 where we used that, according to Lemma~\ref{lem:propLRout}, $f(\x') \geq x_i'$ and thus $d(x_i, f(\x')) = d(x_i, x'_i) + d(x'_i, f(\x))$.
 
 If $i$ instead declares a position $x'_i \in [A_i, B_i+c]$, then for every $i$-compatible profile $\x'$ such that $f(\x') \geq x'_i$ we have that
 \begin{equation}
 \label{eq:OSP_out_left}
 \begin{aligned}
  c_i(x_i, \M(\x')) &= \max\{d(x_i,f(\x')), d(x'_i,f(\x'))\} - p_i(\x') = d(x_i,f(\x')) - p_i(\x')\\
  & = d(x_i, x'_i) + d(x'_i,f(\x')) - p_i(\x') \geq m_i + c.
 \end{aligned}
 \end{equation}
 For every $i$-compatible profile $\x'$ such that $f(\x') < x'_i$, we have instead that
 \begin{equation}
 \label{eq:OSP_out_right}
 \begin{aligned}
  c_i(x_i, \M(\x')) &= \max\{d(x_i,f(\x')), d(x'_i,f(\x'))\} - p_i(\x') = d(x_i,f(\x')) - p_i(\x')\\
  & = d(x_i,f(\x')) - d(x'_i,f(\x')) + m_i + \max\{0, x'_i - B_i\}\\
  & = f(\x') - x_i - x'_i + f(\x') + m_i + \max\{0, x'_i - B_i\}\\
  & \geq (L_i-x_i) - (x'_i - L_i) + m_i  + \max\{0, x'_i - B_i\}\\
  & = (L_i - A_i) + (A_i - x_i) - (x'_i - B_i) - (B_i - L_i) + m_i + \max\{0, x'_i - B_i\}\\
  & \geq m_i + c,
 \end{aligned}
 \end{equation}
 where we used that, according to Lemma~\ref{lem:propLR}, $f(\x') \geq L_i$,
 and that $A_i > a$, and therefore
 \[
 d(x'_i,f(\x')) \leq B_i + c - L_i = L_i - A_i  + c = L_i - x_i \leq d(x_i,f(\x')). \tag*{\qed}
 \]
 \let\qed\relax
\end{proof}

\begin{theorem}
\label{thm:frugal}
  Every optimal OSP direct-revelation mechanism $\M = (f, \p)$
  either sets payments at least as high as OIM,
  or there is an agent $i$ and a profile $\y$ such that $p_i(\y) > 0$.
\end{theorem}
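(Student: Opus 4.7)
The plan is to argue by contradiction: suppose $\M = (f, \p)$ is an optimal OSP direct-revelation mechanism with $p_j(\y) \leq 0$ for every agent $j$ and profile $\y$, yet there exist an agent $i$ and a profile $\x^*$ for which $\M$ charges strictly less than OIM, i.e., $p_i(\x^*) > p_i^{\text{OIM}}(\x^*)$; set $\Delta := p_i(\x^*) - p_i^{\text{OIM}}(\x^*) > 0$. I focus on the main case where $x_i^* \in [A_i, B_i]$ and the prefix $\x_{K_i}^*$ satisfies $s_{\ell-1} \neq s_{r+1}$: the complementary cases either have $p_i^{\text{OIM}}(\x^*) = 0$ (so $p_i(\x^*) > 0$ is already the sought positive payment), or reduce to the main case after accounting for the bid-dependent shifts $d(x_i^*, A_i)$/$d(x_i^*, B_i)$ that OIM applies outside $[A_i, B_i]$. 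By optimality of $\M$, I may assume $f(\x^*) = f^{\text{OIM}}(\x^*)$, so that $p_i^{\text{OIM}}(\x^*) = d(x_i^*, f(\x^*)) - m_i$ and $\M$'s truthful cost for $i$ at $\x^*$ equals exactly $m_i - \Delta$.

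The core step is to apply OSP to a carefully chosen auxiliary type $x_i' := f(\x^*) \in [L_i, R_i]$. Consider the strategy ``declare $x_i^*$ when the true type is $x_i'$'': the monitoring cost on $\x^*$ is $\max\{d(x_i', f(\x^*)), d(x_i^*, f(\x^*))\} = d(x_i^*, f(\x^*))$ because $x_i' = f(\x^*)$ makes $d(x_i', f(\x^*)) = 0$. Hence the overall lying cost equals $d(x_i^*, f(\x^*)) - p_i(\x^*) = m_i - \Delta$. OSP at $i$'s decision node of $\M$ then forces the worst-case truthful cost of $x_i'$ over every $i$-compatible profile to be at most $m_i - \Delta$, which rearranges to $p_i(\bar{\x}) \geq d(x_i', f(\bar{\x})) - (m_i - \Delta) = p_i^{\text{OIM}}(\bar{\x}) + \Delta$ for every profile $\bar{\x}$ with $i$ reporting $x_i'$ — that is, the $\Delta$ surplus is transferred uniformly to the entire $i$-branch of type $x_i'$.

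Next, I iterate this cascade. By Lemma~\ref{lem:propLR} there is a profile $\bar{\x}$ with $i$ reporting $x_i'$ and $f(\bar{\x}) = R_i$; replaying the previous argument with a further auxiliary type $x_i''$ lying to $x_i'$ on $\bar{\x}$ (chosen so that $d(x_i'', f(\bar{\x})) \leq d(x_i', f(\bar{\x}))$, thereby keeping the monitoring max at $R_i - x_i'$) propagates the $\Delta$ surplus to the entire $i$-branch of $x_i''$. After a short chain of such steps extending outward from $[L_i, R_i]$ to the boundary $A_i$ (if $m_i = R_i - A_i$) or $B_i$ (if $m_i = B_i - L_i$), I reach a profile $\hat{\x}$ in which $i$ reports a type $x_i^{(k)}$ with $d(x_i^{(k)}, f(\hat{\x})) = m_i$ — for instance, $x_i^{(k)} = A_i$ and $f(\hat{\x}) = R_i$. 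There $p_i^{\text{OIM}}(\hat{\x}) = d(A_i, R_i) - m_i = 0$, so the cascade gives $p_i(\hat{\x}) \geq \Delta > 0$, contradicting the assumption $p \leq 0$.

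The hard part will be verifying that the $\Delta$ surplus is preserved at every step of the chain: this requires always choosing the next auxiliary type on the ``near'' side of the current facility location so that monitoring's $\max\{\cdot,\cdot\}$ coincides with the larger of the two distances without inflating the lying cost. This rests crucially on the structural properties of $L_i, R_i, A_i, B_i$ from Lemmas~\ref{lem:propLR} and~\ref{lem:propLRout}, and on the existence within $[A_i, B_i]$ of a chain reaching the boundary at which $m_i$ is attained. The remaining bookkeeping (boundary cases $x_i^* \notin [A_i, B_i]$, the trivial branch $s_{\ell-1} = s_{r+1}$, and potential tie-break differences between the medians chosen by $\M$ and OIM) requires additional case analysis but either reduces to the main cascade or yields the positive payment immediately.
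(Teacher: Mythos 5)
Your proposal is, up to contraposition, the same argument as the paper's: a single OSP comparison in a direct-revelation mechanism transfers the cost of one type to a whole neighbourhood of types (the paper's Claim~\ref{claim:start}), this is iterated by bouncing between $i$-compatible profiles whose facility Lemma~\ref{lem:propLR} pins at $L_i$ and at $R_i$, and the contradiction is extracted at the extremal profile $x_i=A_i$, $f(\x)=R_i$, where non-positive payments force a cost of at least $R_i-A_i=m_i$. The paper runs this as ``the minimum cost $\mu_i$ is attained at every type in $[A_i,B_i]$, hence $\mu_i\geq m_i$''; you run it as ``a cost of $m_i-\Delta$ somewhere propagates everywhere, contradicting the $\geq m_i$ bound at the extremal profile''; these are the same proof.

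The genuine gap is exactly the step you flag and defer: that the cascade actually reaches a type at distance $m_i$ from the facility. A single reflection step only enlarges the set of covered types by a bounded amount, and one must prove that the covered interval strictly grows until it contains $[A_i,B_i]$. The paper does this in Lemma~\ref{lemma:in_interval} with an explicit procedure whose interval $\Delta^w$ gains $2|R_i-L_i|$ per bounce --- which in particular requires $L_i\neq R_i$; when $L_i=R_i$ the cascade does not advance at all and the paper needs a separate treatment (Lemmas~\ref{lem:in_equal_limit} and~\ref{lem:opt_within}). In your contrapositive formulation the sub-case $L_i=R_i$, $x_i^*\in[A_i,B_i]$ happens to collapse (there $m_i=0$, so any payment exceeding OIM's is already positive), but you do not observe this. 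Two smaller omissions: the starting profile may have $f(\x^*)=x_i^*$, in which case your auxiliary type $x_i'=f(\x^*)$ coincides with $x_i^*$ and gives no divergence to exploit (the paper's Lemma~\ref{lem:in_equal} handles these profiles separately); and the profiles with $x_i^*\notin[A_i,B_i]$ need the one-sided geometry of Lemma~\ref{lem:propLRout} to control the monitoring maximum, not merely an ``accounting for shifts''. None of this invalidates the plan, but the coverage argument you postpone is where the proof actually lives.
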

\begin{proof}
 Let $\M = (f,\p)$ be an optimal OSP direct-revelation mechanism,
 and suppose that it assigns the lowest possible payments.
 Fix a player $i$ and recall that $K_i$ is the set of agents whose location is known to $i$ when she is queried.
 Clearly, no mechanism can set lower non-positive payments than OIM  when $s_{\ell-1} = s_{r+1}$.
 Thus, we can safely consider that $s_{\ell-1} \neq s_{r+1}$. 
 Next lemmata show some conditions that payments must satisfy in order for $\M$ to be OSP and to minimize payments.
 Specifically, Lemma~\ref{lemma:in_interval} and Lemma~\ref{lem:in_equal}
 focus on profiles $\x$ such that $x_i \in [A_i, B_i]$ and $L_i \neq R_i$,
 Lemma~\ref{lem:in_equal_limit} consider profiles $\x$ such that $x_i = L_i = R_i$, 
 whereas Lemma~\ref{lemma:out_interval} focuses on the remaining profiles.

 Suppose first that $L_i \neq R_i$.
 Then let $\mu_i$ be the minimum cost that $i$ pays in a profile $\x$ such that $f(\x) \neq x_i$
 (such a profile surely exists since, by optimality of $\M$, $i$ is not a dictator), i.e.,
 $\mu_i = \min_{\x \colon f(\x) \neq x_i} c_i(x_i, \M(\x))$.
 We begin by proving this useful claim.
 \begin{claim}
 \label{claim:start}
  Let $\x$ be a profile such that $f(\x) \neq x_i$ and $c_i(x_i, \M(\x)) = \mu_i$.
  If $x_i < f(\x)$, then for every $y_i \in (x_i, \min\{2f(\x)-x_i,b\}]$, it turns out that
  $c_i(y_i, \M(\y)) = \mu_i$ if $f(\y) \neq y_i$, and $c_i(y_i, \M(\y)) \leq \mu_i$ otherwise.
  
  Similarly, if $x_i > f(\x)$, then for every $y_i \in [\max\{a,2f(\x) - x_i\}, x_i)$, it turns out that
  $c_i(y_i, \M(\y)) = \mu_i$ if $f(\y) \neq y_i$, and $c_i(y_i, \M(\y)) \leq \mu_i$ otherwise.
 \end{claim}
 \begin{proof}
  Since $\M$ is a direct-revelation mechanism, then $i$ diverges on $y_i$ and $x_i$.
  Then, since $\M$ is OSP, it must be the case that
  \begin{align*}
   c_i(y_i, \M(\y)) & \leq c_i(y_i, \M(\x)) = \max \{d(y_i, f(\x)), d(x_i,f(\x))\} - p_i(\x)\\
   & = d(x_i, f(\x)) - p_i(\x) = c_i(x_i, \M(\x)) = \mu_i,
  \end{align*}
  where we used that $d(y_i, f(\x)) \leq d(x_i, f(\x))$ by definition of $y_i$.
  
  However, if $f(\y) \neq y_i$, then, by definition of $\mu_i$, it must be the case that $c_i(y_i, \M(\y)) \geq \mu_i$,
  that leaves $c_i(y_i, \M(\y)) = \mu_i$ as the only possible option.
 \end{proof}
 
 \begin{lemma}
 \label{lemma:in_interval}
  If $L_i \neq R_i$, then for every $y_i \in [A_i, B_i]$, $c_i(y_i, \M(\y)) = \mu_i$ if $f(\y) \neq y_i$,
  and $c_i(y_i, \M(\y)) \leq \mu_i$ otherwise.
 \end{lemma}
 \begin{proof}
  Consider the following procedure:
  \begin{enumerate}
   \item Let $w = 1$, $\Delta^0 = \emptyset$, and $\y^0$ be a profile achieving cost $\mu_i$.
   \item \label{item:loop} Let $t^w = 2f(\y^{w-1})-y^{w-1}_i$.
   \item If $t^w \geq \frac{L_i + R_i}{2}$,
  consider the profile $\y^w$ such that
  $$
   y^w_j = \begin{cases}
            x_j, & \text{if } j \in K_i;\\
            t^w, & \text{if } j = i;\\
            a, & \text{otherwise}.
           \end{cases}
  $$
  and let $\Delta^w = [\max\{a,2f(\y^w) - t^w\}, t^w]$.
  Otherwise consider $\y^w$ such that
  $$
   y^w_j = \begin{cases}
            x_j, & \text{if } j \in K_i;\\
            t^w, & \text{if } j = i;\\
            b, & \text{otherwise}.
           \end{cases}
  $$
  and let $\Delta^w = [t^w, \min\{2 f(\y^w) - t^w,b\}]$.
  \item If $[A_i, B_i] \not\subseteq \Delta^w$, set $w = w+1$ and repeat from step~\ref{item:loop}.
  \end{enumerate}

  Let us first prove, by induction, that for every $w \geq 0$, it holds that $c_i(y^w_i, \M(\y^w)) = \mu_i$.
  This is clearly true for $w = 0$.
  Suppose now that $c_i(y^{w-1}_i, \M(\y^{w-1})) = \mu_i$.
  If $t^w \geq \frac{L_i + R_i}{2}$, then, according to Lemma~\ref{lem:propLR}, we have that $f(\y^w) = L_i \neq y^w_i$.
  Similarly, if $t^w < \frac{L_i + R_i}{2}$, then, according to Lemma~\ref{lem:propLR}, we have that $f(\y^w) = R_i \neq y^w_i$.
  Then, by Claim~\ref{claim:start} applied with $\x = \y^{w-1}$, it holds that $c_i(y^w_i, \M(\y^w)) = \mu_i$.
  In fact, Claim~\ref{claim:start} actually proves that for every $w \geq 0$,
  and every $y_i \in \Delta^w$, it holds that $c_i(y_i, \M(\y)) = \mu_i$ if $f(\y) \neq y_i$,
  and $c_i(y_i, \M(\y)) \leq \mu_i$ otherwise.
  Hence, we are only left to prove that there is a $w$ such that $\Delta^w \supseteq [A_i, B_i]$.
  
  To this aim, we next we prove that for every $w \geq 1$, the size of the range $\Delta^w$ is larger than the size of the range $\Delta^{w-1}$.
  This is clearly true for $w = 1$ since $|\Delta^0| = 0$ and $f(\y^1) \neq t^1$, from which we achieve that $|\Delta^1| \geq 2 |f(\y^1) - t^1| > 0$.
  Consider, instead, $w > 1$.
  Suppose that $\Delta^{w-1} = [\max\{a,2f(\y^{w-1}) - t^{w-1}\}, t^{w-1}]$, from which we have that $|\Delta^{w-1}|\geq 2t^{w-1}-2f(\y^{w-1})$.
  Note that this only occurs if $t^{w-1} \geq \frac{L_i + R_i}{2} = L_i + \frac{R_i - L_i}{2}$ and thus $f(\y^{w-1}) = L_i$,
  from which it follows that $t^w = 2f(\y^{w-1}) - t^{w-1} \leq L_i - \frac{R_i - L_i}{2} < \frac{L_i + R_i}{2}$.
  Therefore $\Delta^w = [t^w, \min\{2 f(\y^w) - t^w,b\}]$.
  If $t_w \geq 2R_i - b$, then $\min\{2 f(\y^w) - t^w,b\} \neq b$,
  and thus $|\Delta_w| = 2 f(\y^w) - 2t^w = 2R_i - 2t_w$,
  otherwise $|\Delta_w| = b - t^w \geq 2R_i - 2t_w$.
  Hence, in both cases we achieve that
  $$|\Delta^w| \geq 2R_i - 2f(\y^{w-1}) + 2t^{w-1} - 2f(\y^{w-1}) = 2t^{w-1} - 2f(\y^{w-1}) + 2|R_i - L_i| > |\Delta^{w-1}|.$$
  The case for $\Delta^{w-1} = [t^{w-1}, \min\{b,2f(\y^{w-1}) - t^{w-1}\}]$ can be similarly proved.
  
  The lemma then follows since the above procedure eventually considers $\Delta \supseteq [A_i, B_i]$.
 \end{proof}
 
 \begin{lemma}
  \label{lem:in_equal}
  If $L_i \neq R_i$, then for every $y_i \in [A_i, B_i]$ and every $i$-compatible profile $\y$ such that $f(\y) = y_i$,
  it holds that $c_i(y_i, \M(\y)) = \mu_i$.
 \end{lemma}
 \begin{proof}
  Suppose that that there is an $i$-compatible profile $\y$ with $f(\y)= y_i$
 and $c_i(y_i, \M(\y)) < \mu_i$.
  Consider $\delta < \min\{\mu_i - c_i(y_i, \M(\y)), \max\{y_i - A_i, B_i - y_i\}\}$.
  According to this choice of $\delta$,  it must exists $t \in [A_i, B_i]$ such that $d(y_i,t) = \delta$.
 Moreover, since $L_i \neq R_i$ by hypothesis, either $t \neq L_i$ or $t \neq R_i$.
  Then, according to Lemma~\ref{lem:propLR}, there is a profile $\y'$ such that $y'_i = t$ and $f(\y') \neq y'_i$.
  Thus, by Lemma~\ref{lemma:in_interval}, it holds that $c_i(y'_i, \M(\y')) = \mu_i$.
  
  However, since $\M$ is a direct-revelation mechanism, then $i$ diverges on $y_i$ and $y'_i$.
  Then, since $\M$ is OSP, it must be the case that
  \begin{align*}
   \mu_i = c_i(y'_i, \M(\y')) & \leq c_i(y'_i, \M(\y)) = \max \{d(y'_i, f(\y)), d(y_i,f(\y))\} - p_i(\y)\\
   & = d(y'_i, y_i) + d(y_i,f(\y)) - p_i(\y) = \delta + c_i(y_i, \M(\y)) < \mu_i,
  \end{align*}
  that is absurd.
 \end{proof}

Consider now the case that $L_i = R_i$.
In this case we let $\mu_i = \min_{\x \colon x_i = L_i} c_i(x_i, \M(\x))$.
Next we prove a lower bound for $\mu_i$.
\begin{lemma}
  \label{lem:in_equal_limit}
   Let $\y'$ be the $i$-compatible profile such that $f(\y') \neq y'_i$ of minimum cost,
i.e., $\y' = \arg \min_{\x \colon f(\x) \neq x_i} c_i(x_i, \M(\x))$. 
  If $L_i = R_i$, then for every $i$-compatible profile $\y$ such that $y_i = L_i$,
  it holds that $c_i(y_i, \M(\y)) \geq c_i(y'_i, \M(\y'))-d(y_i, y'_i)$.
 \end{lemma}
 \begin{proof}
  Suppose that that there is an $i$-compatible profile $\y$ with $y_i=L_i$
 and $c_i(y_i, \M(\y)) < c_i(y'_i, \M(\y'))-d(y_i, y'_i)$. Observe that, since $L_i = R_i$, it must be the case that $f(\y) = y_i$.
  
  Since $\M$ is a direct-revelation mechanism, then $i$ diverges on $y_i$ and $y'_i$.
  Then, since $\M$ is OSP, it must be the case that
  \begin{align*}
   c_i(y'_i, \M(\y')) & \leq c_i(y'_i, \M(\y)) = \max \{d(y'_i, f(\y)), d(y_i,f(\y))\} - p_i(\y)\\
   & = d(y'_i, y_i) + d(y_i,f(\y)) - p_i(\y) < c_i(y'_i, \M(\y')),
  \end{align*}
  that is absurd.
 \end{proof}
 
 \begin{lemma}
 \label{lemma:out_interval}
  For every $c > 0$, if $y_i = A_i-c$, 
then for every $i$-compatible profile $\y$ $c_i(y_i,\M(\y)) \leq \mu_i+c$.

 If $y_i = B_i+c$, 
then for every $i$-compatible profile $\y$ $c_i(y_i,\M(\y)) \leq \mu_i+c$.
% if $f(\y) \neq y_i$,
%  and $c_i(y_i,\M(\y)) \in [\mu(y_i)-\delta, \mu(y_i)+\delta]$ otherwise, 
%  where $\mu(y_i) = \min_{\x \colon x_i = y_i+\delta} c_i(x_i, \M(\x))$.
%
 % Similarly, if $y_i > B_i$, then $c_i(y_i,\M(\y)) \in [\mu(y_i), \mu(y_i)+\delta]$ if $f(\y) \neq y_i$,
%  and $c_i(y_i,\M(\y)) \in [\mu(y_i)-\delta, \mu(y_i)+\delta]$ otherwise,
%  where $\mu(y_i) = \min_{\x \colon x_i = y_i-\delta} c_i(x_i, \M(\x))$.
 \end{lemma}
 \begin{proof}
  Consider $y_i < A_i$.
 and let $\x$ be an $i$-compatible profile with $x_i = A_i$ with $c_i(x_i, \M(\x)) = \mu_i$
 (it exists by Lemma~\ref{lem:in_equal} if $L_i \neq R_i$, and by definition of $\mu_i$, otherwise).
  By definition of $A_i$, we have that $x_i = A_i \leq L_i$.
  Then, by Lemma~\ref{lem:propLRout}, we have that $f(\x) \geq x_i$,
  and thus $d(y_i, f(\x)) = d(y_i, x_i) + d(x_i, f(\x)) = c + d(x_i, f(\x)) > d(x_i,f(\x))$.

  However, since $\M$ is a direct-revelation mechanism, then $i$ diverges on $y_i$ and $x_i$.
  Then, since $\M$ is OSP, it must be the case that
  \begin{align*}
   c(y_i, f(\y)) & \leq c_i(y_i, \M(\x)) = \max \{d(y_i, f(\x)), d(x_i,f(\x))\} - p_i(\x)\\
   & = c + d(x_i,f(\x)) - p_i(\x) = c + \mu_i.
% = \mu(y_i) + \delta.
  \end{align*}
%  
%  On the other side, if $f(\y) = y_i$, then
%  \begin{align*}
%   \mu(y_i) = c(x_i, f(\x)) & \leq c_i(x_i, \M(\y)) = \max \{d(x_i, f(\y)), d(y_i,f(\y))\} - p_i(\y)\\
%   & = d(x_i, y_i) + d(y_i,f(\y)) - p_i(\x) = \delta + c_i(y_i, \M(\y)),
%  \end{align*}
%  from which we have that $c_i(y_i,\M(\y)) \geq \mu(y_i) - \delta$.
%
%  If instead, $f(\y) \neq y_i$, then, according to Lemma~\ref{lem:propLRout}, $f(\y) > y_i$,
%  and thus $d(y_i, f(\y)) \geq d(x_i, f(\y))$. Then
%  \begin{align*}
%   \mu(y_i) = c(x_i, f(\x)) & \leq c_i(x_i, \M(\y)) = \max \{d(x_i, f(\y)), d(y_i,f(\y))\} - p_i(\y)\\
%   & = d(y_i,f(\y)) - p_i(\x) = c_i(y_i, \M(\y)),
%  \end{align*}
%  from which we have that $c_i(y_i,\M(\y)) \geq \mu(y_i)$.
  
  The case for $y_i = B_i + c$ is similar. %and it is omitted.
 \end{proof}
 
 These lemmata fix the payments for $i$-compatible profiles $\y$ such that $y_i \in [A_i,B_i]$ when $L_i \neq R_i$.
 As for the remaining cases, next we show how to choose the minimum payments that enable the mechanism $\M$ to be optimal and OSP with monitoring.
 In particular, Lemma~\ref{lem:opt_within} focuses on profiles $\y$ such that $y_i =L_i = R_i$.
 Lemma~\ref{lem:opt_outside} focuses instead on profiles $\y$ with $y_i = A_i - c$ or $y_i = B_i + c$.
% Finally, Lemma~\ref{lem:opt_extremes} focuses on $\y$ such that $y_i \in \{a,b\}$.
 \begin{lemma}
  \label{lem:opt_within}
   Let $\y'$ be the $i$-compatible profile such that $f(\y') \neq y'_i$ of minimum cost,
i.e., $\y' = \arg \min_{\x \colon f(\x) \neq x_i} c_i(x_i, \M(\x))$.
If there is an $i$-compatible profile $\y$ with $y_i = L_i = R_i$ $c_i(y_i, \M(\y)) > c_i(y'_i, \M(\y'))-d(y_i,y'_i)$,
  then there is another direct-revelation optimal OSP mechanism $\M'$
  that for each profile assigns payments at least as small as $\M$ and for at least one profile it assigns a smaller payment.
 \end{lemma}
 \begin{proof}
  Consider $\M' = (f, \p')$ as $\M$ except that it sets payments such that 
$c'_i(y_i, \M'(\y)) = d(y_i,f(\y)) - p'_i(\y) = c_i(y'_i, \M(\y'))-d(y_i,y'_i)$, and for every $\x'$ such that $x'_i \neq y_i$,
it sets payments such that $c'_i(x'_i, \M'(\x')) - \min_{\x \colon x_i = y_i} c'_i(x_i, \M'(\x)) = c_i(x'_i, \M(\x')) - \min_{\x \colon x_i = y_i} c_i(x_i, \M(\x))$.
  Clearly, $\M'$ is a direct-revelation mechanism. Moreover, since it places the facility in the median location,
  it is optimal if it is OSP. Finally, $\M'$ reduces the payment assigned to $i$ at least in the profile $\y$.
  
  Hence, it is only left to show that $\M'$ is OSP.
  Clearly, the OSP condition still holds between two profiles in which the location of $i$ is different from $y_i$,
  and when the real location of $i$ is exactly $y_i$. Next we show, that if the real location of $i$ is $x'_i \neq y_i$,
  then it is not convenient for $i$ to declare $y_i$. That is, we prove that $c'_i(x'_i, \M'(\x')) \leq c'_i(x'_i, \M'(\y))$ for every $\x'$.
  
%  Consider first $x_i \in [A_i, B_i]$.
%  According to Lemma~\ref{lemma:in_interval} $c'_i(x_i, \M'(\x)) \leq c_i(x_i, \M(\x)) \leq \mu_i$ for every $\x$.
%  Indeed,
%  \begin{align*}
 %  c'_i(x_i, \M'(\y)) & = \max\{d(x_i, f(\y)), d(y_i, f(\y))\} + p_i(\y) = d(x_i, f(\y)) - p_i(\y)\\
 %  & = d(x_i, y_i) + c'_i(y_i,\M'(\y)) \geq \delta + \mu_i - \delta = \mu_i,
%  \end{align*}
%  as desired.
  
  Since $L_i = R_i$, it must be the case that $f(\y)=y_i=A_i = B_i$, and then either $x'_i = A_i - c$ or $x'_i = B_i + c$, with $c > 0$.
  According to Lemma~\ref{lemma:out_interval}, we have that $c'_i(x'_i, \M'(\x')) - c'_i(y_i, \M'(\y)) = c_i(x'_i, \M(\x')) - \min_{\x \colon x_i = y_i} c_i(x_i, \M(\x)) \leq c$.
  Instead, 
  \begin{align*}
   c'_i(x_i, \M'(\y)) & = \max\{d(x_i, f(\y)), d(y_i, f(\y))\} + p_i(\y) = d(x_i, f(\y)) - p_i(\y)\\
   & = d(x_i, y_i) + c'_i(y_i,\M'(\y)) = c + c'_i(y_i,\M'(\y)). \tag*{\qed}
  \end{align*}
%  where we used that $d(y_i,A_i) \geq \delta$ and $d(y_i, B_i) \geq \delta$,
%  and thus $d(x_i, y_i) \geq (c+1)\delta$.
\let\qed\relax
 \end{proof}

 \begin{lemma}
 \label{lem:opt_outside}
  If there is $\y$ with $y_i = A_i - c$, $c > 0$, and $c_i(y_i, \M(\y)) < \mu_i+c$,
 then there is another direct-revelation optimal OSP mechanism $\M'$
  that for each profile assigns payments at least as small as $\M$ and for at least one profile it assigns a smaller payment.
  
  Similarly, if there is $\y$ with $y_i = B_i + c$ and $c_i(y_i, \M(\y)) < \mu_i+c$,
 then there is another direct-revelation optimal OSP mechanism $\M'$
  that for each profile assigns payments at least as small as $\M$ and for at least one profile it assigns a smaller payment.
 \end{lemma}
 \begin{proof}
  By Lemma~\ref{lemma:in_interval}, Lemma~\ref{lem:in_equal}, and Lemma~\ref{lem:opt_within}, we know that $c_i(y_i, \M(\y)) = \mu_i$ for every $\y$ such that $y_i \in [A_i,B_i]$.
  
  Let $y^A_i = \max_{c > 0} \{y_i = A_i-c \colon c_i(y_i, \M(\y)) < \mu_i+c\}$
  and $y^B_i = \min_{c > 0} \{y_i = B_i+c \colon c_i(y_i, \M(\y)) < \mu_i+c\}$.
  Denote as $y^*_i$ the one closer to the interval $[A_i, B_i]$,
  i.e. $y^*_i = \arg \min_{y_i = y^A_i, y^B_i} \min \{d(A_i, y_i), d(B_i, y_i)\}$.
  Henceforth, we assume w.l.o.g. that $y^*_i = y^A_i$.
  Let also $\y^* = \min_{\hat{\y} \colon \hat{y}_i = y^*_i} c_i(\hat{y}_i, \M(\hat{\y}))$.
  Observe that, by definition, it must be the case that $c_i(y^*_i, \M(\y^*)) < \mu_i+c^*$,
 where $c^* = d(y^*_i,A_i)$.
  
  Consider $\M' = (f, \p')$ as follows: for every  $\y$ such that $y_i \in (A_i - c^*, B_i + c^*)$,
  set payments such that $c'_i(y_i, \M'(\y)) = d(y_i, f(\y)) + p'_i(\y) = c_i(y_i, \M(\y)) - (\mu_i+c^*-c_i(y^*_i, \M(\y^*))$;
  for every $\y$ such that $y_i \in \{y^*_i,B_i+c^*\}$, set payments such that $c'_i(y_i, \M'(\y)) = c_i(y^*_i, \M(\y^*))$;
  for every remaining profile $\y$, set payments such that $c'_i(y_i, \M'(\y)) = c_i(y_i, \M(\y))$.
  
  Clearly, $p'_i(\y) < p_i(\y)$ for every $\y$ such that $y_i \in (A_i - c^*, B_i + c^*)$,
  whereas $p'_i(\y) \leq p_i(\y)$ for every other profile $\y$.
  Moreover, $\M'$ is a direct-revelation mechanism, and, since it places the facility in the median location,
  it is optimal if it is OSP. Hence, it is only left to show that $\M'$ is OSP.
  
  It is immediate to see that the OSP condition holds if
  from every profile one moves to another profile in which the location of $i$ is $x_i < y^*_i$ or $x_i > B_i+c^*$, and if
  from a profile in which the location of $i$ is $x_i \in (A_i - c^*, B_i + c^*)$
  one moves to another profile in which the location of $i$ is $y_i \in (A_i - c^*, B_i + c^*)$.
  Next we show that even if the real location of $i$ is $x_i \leq y^*_i$ or $x_i \geq B_i+c^*$,
  then it is not convenient for $i$ to declare $y_i \in(A_i - c^*, B_i + c^*)$.
  That is, we prove that $c'_i(x_i, \M'(\x)) \leq c'_i(x_i, \M'(\y))$ for every $\x$ and $\y$ such that $x_i$ and $y_i$ are as above.

  To this aim, let $\mu'_i = c_i(y^*_i, \M(\y^*) - c^*$.
    By construction, 
  $c'_i(x_i, \M'(\x))  \mu'_i$ when $x_i \in [A_i,B_i]$,
  $c'_i(x_i, \M'(\x)) = \mu'_i+c$ when $x_i = A_i - c$ or $x_i = B_i + c$, with $0 \leq c \leq c^*$,
  whereas $c'_i(x_i, \M'(\x)) \leq \mu'_i+c$ when $x_i = A_i - c$ or $x_i = B_i + c$, with $c > c^*$.
  Then, the OSP condition can be proved as in \eqref{eq:OSP_out_left} and \eqref{eq:OSP_out_right}
  with $\mu'_i$ in place of $m_i$.
 \end{proof}

 In conclusion, a direct-revelation mechanism $\M$ that is OSP, optimal and whose payments cannot be lowered must be exactly as  OIM with $\mu_i$ in place of $m_i$. However, suppose w.l.o.g. that $m_i = R_i - A_i$ and consider a profile $\x$ such that $x_i = A_i$ and $f(\x) = R_i$.
 According to Lemma~\ref{lem:propLR}, such a profile surely exists. Moreover, as showed above, $c_i(x_i, \M(\x)) = \mu_i$.
 Therefore, if all payments are non-positive, then
 \[
  \mu_i = c_i(x_i, \M(\x)) = d(x_i, f(\x)) - p_i(\x) \geq d(x_i, f(\x)) = R_i - A_i = m_i. \tag*{\qed}
 \]
 \let\qed\relax
\end{proof}

 Theorem~\ref{thm:frugal} does not rule out the existence of a \emph{non-direct-revelation} mechanism that is OSP with lower payments.
 However, we next show that for \emph{every} mechanism there is at least one instance
 on which it cannot set payments lower than the one assigned by OIM.
 In particular, it turns out that for every mechanism there is at least one instance
 on which at least $\left\lceil \frac{n}{2}\right\rceil - 1$ agents incur in a very large cost, namely $b-a$.

\begin{lemma}
\label{lem:frugal2}
 For every optimal OSP mechanism with monitoring there is an instance of the facility location problem
 for which the mechanism sets payments at least as high as OIM.
\end{lemma}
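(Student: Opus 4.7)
The plan is to lift the argument of Theorem~\ref{thm:frugal} from direct-revelation mechanisms to arbitrary extensive-form OSP mechanisms with monitoring. The main new difficulty is that in an extensive-form mechanism, the payment to an agent depends on the interactive history, giving the mechanism more freedom than a direct-revelation one, so the lemmata underlying Theorem~\ref{thm:frugal} do not apply verbatim.

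Given an optimal OSP mechanism $\M$ with monitoring, I would construct a single profile $\x^*$ on which $\M$'s charges to some agent match those of OIM by tracing a path from the root of $\M$'s tree down to a leaf. At each internal node $u$ along the path, let $i$ be a divergent agent, and let $K_i$ be the set of agents whose types have been (partially) constrained by the actions taken higher up. The key structural observation, analogous to Lemma~\ref{lem:propLR}, is this: whenever $|K_i| < \lceil n/2 \rceil - 1$, the agents not in $K_i \cup \{i\}$ can still collectively push the leftmost median either to $a$ or to $b$, regardless of $i$'s own action at $u$. Hence the ``virtual'' interval $[L_i, R_i]$ associated with $i$ at $u$ is all of $[a, b]$, giving $m_i = b - a$ exactly as in OIM.

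Given this, I would invoke the OSP constraint at $u$ with two carefully chosen types for $i$: a truthful type $t_i$ near one extreme of $[a, b]$ and a lie $b'_i$ near the opposite extreme. The monitoring property plays the role it did in Theorem~\ref{thm:general}: it forces the lying cost to be at least the distance from the declared bid to the facility. Combined with the freedom of the uncommitted agents to push the facility to the far extreme on both the truthful continuation (maximizing the worst truthful cost) and the lying continuation (keeping the best lying cost large), the OSP inequality yields, by an adaptation of Lemmata~\ref{lemma:in_interval}--\ref{lemma:out_interval}, that $i$'s cost on some path-compatible profile is at least $b - a$, matching $m_i$ in OIM.

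The main obstacle is ensuring that these lower bounds hold simultaneously on a single $\x^*$, rather than on different profiles for different divergent agents. I expect this to be handled by extending the chosen path all the way to a leaf and committing, at the tail of the path, the remaining uncommitted agents to an extreme value (say, $b$); the analogue of Lemma~\ref{lem:propLR} guarantees that this realizes the worst-case continuation for each earlier divergent agent in a consistent way. The result is a single profile $\x^*$ on which each of the first $\lceil n/2 \rceil - 1$ divergent agents along the path incurs cost at least $b - a$, so that $\M$'s payments on $\x^*$ are no lower (in charges) than those of OIM, as required.
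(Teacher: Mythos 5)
There is a genuine gap, and it sits in the core OSP deviation you set up. To force a charge of (roughly) $b-a$ on the target profile $\x^*$, the OSP inequality gives you
$d(t_i,f(\y)) - p_i(\y) \leq \max\{d(t_i,f(\x^*)), d(b'_i,f(\x^*))\} - p_i(\x^*)$,
so the charge $-p_i(\x^*)$ is bounded below by the worst truthful cost \emph{minus the monitored distance term on the lying profile}. You therefore need that distance term to vanish, which forces the true type $t_i$ and the lie $b'_i$ to be adjacent and both essentially at the facility's location on $\x^*$. Your choice of ``a truthful type near one extreme and a lie near the opposite extreme'' makes $\max\{d(t_i,f(\x^*)), d(b'_i,f(\x^*))\} \geq \frac{b-a}{2}$ no matter where the facility lands, so the inequality can never force a charge above about $\frac{b-a}{2}$; worse, deliberately ``keeping the best lying cost large'' makes the OSP constraint weak rather than binding. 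The paper instead takes true type $b$ and lie $b-\delta$ (same extreme), fixes the single profile $\x$ with $\frac{n+1}{2}$ agents at $b-\delta$ and $\frac{n-1}{2}$ at $a$ (so $f(\x)=b-\delta$ and the distance term is only $\delta$), and builds for each early agent $j$ a separate truthful-continuation profile $\y^{(j)}$ with $f(\y^{(j)})=a$ by optimality; the resulting bound $p_j(\x) \leq \delta-(b-a)+p_j(\y^{(j)}) \leq \delta-(b-a)$ lands on the common profile $\x$ for all of them.

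Your handling of the ``single profile'' obstacle is also not a proof. Committing the tail of the path to $b$ does not by itself identify which agents get charged or make the per-agent deviations compatible with the same history. The paper resolves this by ordering agents according to the time $t_\M(i)$ at which they first diverge on the pair $(a, b-\delta)$, showing by an optimality argument that at most $\frac{n-1}{2}$ agents never do so, and then, for each $j \leq \frac{n-1}{2}$, carefully choosing the divergence node as $t^*=\min\{t', t_\M(j)\}$ (where $t'$ is when $j$ first separates $b-\delta$ from $b$) so that the deviating profile $\y^{(j)}$ agrees with $\x$ on all agents already committed before $t^*$ and sets the rest to $a$. None of your proposed analogues of Lemmata~\ref{lem:propLR}--\ref{lemma:out_interval} substitutes for these two steps, and the direct-revelation lemmata you want to ``lift'' rely on every pair of types diverging at the root, which fails in a general extensive-form mechanism.
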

\begin{proof}
 Let $\M = (f,\p)$ be an optimal OSP mechanism and let $\delta$ be an arbitrary constant.
 For every player $i$, we define $t_\M(i)$ as the time in which during the execution of mechanism $\M$ agent $i$ diverges on types $a$ and $b-\delta$.
 That is, $t_\M(i)$ is the time step in which $\M$ asks agent $i$ to take an action when her type is $a$ that is different
 from the action she takes if her type is $b-\delta$.
 Note that $\M$ may ask to more than one agent to diverge on types $a$ and $b-\delta$ at the same time.
 Moreover, there may be agents that never diverge on types $a$ and $b-\delta$ (for which we set $t_\M(i) = \infty$).
 However, these are at most $\frac{n-1}{2}$, otherwise $\M$ must give the same output on instance $\x=(b-\delta, \ldots, b-\delta)$
 and on the instance $\y$ such that $y_j = b-\delta$ if $t_\M(j) < \infty$, and $y_j = a$ otherwise.
 But this contradicts the optimality of $\M$.
 
 We now show that for the following instance the payments assigned by $\M$ are at least as high as the payment assigned by OIM. We assume that the number $n$ of agents is odd, and, without loss of generality, that agents are labeled so that
 $t_\M(1) \leq t_\M(2) \leq \ldots \leq t_\M(n)$.
 We consider the instance $\x$ according to which
 the real position of agent $i$ is $x_i = b-\delta$ if $i \leq \frac{n+1}{2}$,
 and $x_j = a$, otherwise.
 It is not hard to see that if agents are processed in increasing order of their label,
 then OIM on this instance assigns a zero payment to every agent whose real position is $a$,
 and, among agents with real position $b-\delta$, only to the last to be queried.
 It assigns instead a payment of $b-a$ to every remaining agent.
 
 We next show that for $j \leq \frac{n-1}{2}$, it must be the case that
 $\M$ also sets $p_j^{\M}(\x) \leq \delta - (b-a)$.
 Let indeed $t'$ be the first step within mechanism $\M$ in which $j$ diverges on types $b-\delta$ and $b$.
 We set $t^* = \min\{t', t_\M(j)\}$.
 Consider then the following instance $\y$:
 $y_j = b$,
 $y_k = b-\delta$ if $t_\M(k) < t^*$ and $k \neq j$,
 and $y_k = a$ otherwise.
 Note that there are at most $j-1 \leq \frac{n-1}{2} - 1$ agents whose location is $b-\delta$
 and at least $\frac{n+1}{2}$ agents whose location is $a$.
 Then, by optimality of $\M$, we have that $f(\y) = a$.
 Moreover, since $\M$ is OSP, it follows that
 \begin{align*}
  & d(y_j, f(\y)) - p_j^\M(\y) \leq \max\{d(y_j,f(\x)),d(x_j,f(\x))\} - p_j^\M(\x)\\
  \Rightarrow \qquad & p_j^\M(\x) \leq \delta - (b - a) + p_j^\M(\y).
 \end{align*}
 Since $p_j^\M(\y) \leq 0$, the claim then follows by having $\delta$ going to 0.
 \end{proof}
 
The lemma above does not exclude that an indirect optimal mechanism might set payments to the agents smaller than OIM's only for some specific order in which agents are queried. It is left open to understand if this the case.
However, we remark that OIM maintains OSP irrespectively of such an ordering.
%; it remains unclear whether OSP can be coupled with cheaper payments for some order of the agents and OIM payments for others.   
 
\section{Conclusions}
We have studied the limitations of OSP mechanisms in terms of the approximation guarantee of their outputs. By focusing on two paradigmatic problems in the literature, machine scheduling and facility location, we have shown that OSP can yield a significant loss in the quality of the solutions returned. We have proposed the use of a novel mechanism design paradigm, namely monitoring, as a way to reconcile OSP with good approximations. Our positive results show how the ingredients needed for truthfulness with monitoring marry up the demands needed for OSP. 

We leave open the problem of understanding the extent to which this parallel holds in general. Several additional open problems pertain the two case studies considered. For machine scheduling, it would be interesting to see whether the lower bound can be improved. For facility location, it is interesting to establish if indirect mechanisms can be more frugal for the agents. More generally, the mechanisms with monitoring for which we provide an OSP implementation are shown to be collusion-resistant; is there any way to guarantee OSP (with monitoring) without relying on coalitional notions of incentive-compatibility? And how hard is it to design OSP mechanisms that do not use any additional control on agents' declarations?

\bibliographystyle{plainnat}
\bibliography{osp}
\end{document}